\documentclass[pdflatex,sn-mathphys-num]{sn-jnl}

\usepackage{scalerel}
\usepackage{tikz}
\usetikzlibrary{svg.path}

\definecolor{orcidlogocol}{HTML}{A6CE39}
\tikzset{
  orcidlogo/.pic={
    \fill[orcidlogocol] svg{M256,128c0,70.7-57.3,128-128,128C57.3,256,0,198.7,0,128C0,57.3,57.3,0,128,0C198.7,0,256,57.3,256,128z};
    \fill[white] svg{M86.3,186.2H70.9V79.1h15.4v48.4V186.2z}
                 svg{M108.9,79.1h41.6c39.6,0,57,28.3,57,53.6c0,27.5-21.5,53.6-56.8,53.6h-41.8V79.1z M124.3,172.4h24.5c34.9,0,42.9-26.5,42.9-39.7c0-21.5-13.7-39.7-43.7-39.7h-23.7V172.4z}
                 svg{M88.7,56.8c0,5.5-4.5,10.1-10.1,10.1c-5.6,0-10.1-4.6-10.1-10.1c0-5.6,4.5-10.1,10.1-10.1C84.2,46.7,88.7,51.3,88.7,56.8z};
  }
}

\newcommand\orcidicon[1]{\href{https://orcid.org/#1}{\mbox{\scalerel*{
\begin{tikzpicture}[yscale=-1,transform shape]
\pic{orcidlogo};
\end{tikzpicture}
}{|}}}}

\usepackage{graphicx}%
\usepackage{multirow}%
\usepackage{amsmath,amssymb,amsfonts}%
\usepackage{amsthm}%
\usepackage{mathrsfs}%
\usepackage[title]{appendix}%
\usepackage{xcolor}%
\usepackage{textcomp}%
\usepackage{manyfoot}%
\usepackage{booktabs}%
\usepackage{algorithm}%
\usepackage{algorithmic}
\usepackage{listings}%
\usepackage{babel}
\usepackage{comment}
\usepackage{bm}
\usepackage{amsmath}
\usepackage{amsfonts}
\usepackage{latexsym}
\usepackage{booktabs}
\usepackage{subcaption}
\DeclareGraphicsExtensions{.jpg,.jpeg,.pdf,.png,.mps}
\usepackage{epsfig}
 
\usepackage{rotating}
\usepackage{color}
\usepackage{xcolor}
\usepackage{colortbl}
\usepackage[T1]{fontenc}
\usepackage{threeparttable}
\usepackage{lipsum}
\usepackage{url} 
\usepackage{hyperref}
\usepackage{lmodern}
\date{}
\hypersetup{hidelinks}

\theoremstyle{thmstyleone}%
\newtheorem{theorem}{Theorem}
\theoremstyle{thmstyletwo}%

\theoremstyle{thmstylethree}%
\newtheorem{definition}{Definition}%

\begin{document}

\title[Random Reward Phase-Type Distributions]{Random Reward Phase-Type Distributions\\With Applications in Latent Severity Modeling}


\author*{\fnm{Simon} \sur{Pauli} \orcidicon{0009-0004-7458-9775}}\email{simon.pauli@jku.at}

\author{\fnm{Andreas} \sur{Futschik} \orcidicon{0000-0002-7980-0304}}\email{andreas.futschik@jku.at}

\affil[1]{\orgdiv{Institute of Applied Statistics}, \orgname{Johannes Kepler University}, \orgaddress{\street{Altenberger Straße 69}, \city{Linz}, \postcode{4040}, \state{Upper Austria}, \country{Austria}}}
\newcommand{\logit}{\text{logit}}


\abstract{    This paper proposes an extension to discrete Phase-Type distributions (DPH) by introducing random rewards. These allow for modeling a system in which a visit to a certain state does not emit a deterministic reward. Instead, the rewards follow either a Bernoulli or a geometric distribution. Utilizing this increased flexibility, we further sketch a possible use case for these random rewards by introducing the Inertia-Escalation model (IEM), a process with latent severity levels characterized through two parameters: Inertia $\nu$ and escalation $\eta$. We also discuss parameter inference for such models.
    To validate and explore random rewards and the IEM, we conducted extensive simulations and applied the model to two datasets: historical warfare and the Telco customer churn dataset.}

\keywords{Discrete Phase-Type, Random Rewards, Multivariate Time Series, EM algorithm}



\maketitle

\section{Introduction}\label{introSec}

Markov processes are widely used to model various natural, social, and business phenomena. If a Markov Chain has one absorbing state, the distribution of the time $\tau$ to absorption is called a Phase-Type distribution.

Consider a discrete-time Markov chain with  the transition matrix 
\begin{center}\begin{equation}\label{fullchainEq}
    \boldsymbol\Lambda=
    \bordermatrix{&\textbf{S}&S^{abs} \cr
         \textbf{S}&\textbf{T}&\textbf{t}  \cr
         S^{abs}&\textbf{0}&1 
         }.
\end{equation}\end{center}

In the \textit{sub-transition} matrix $\textbf{T}\in[0,1]^{d\times d}$, $\boldsymbol\Lambda$ contains transition probabilities between the transient states $\textbf{S}=(1,\ldots,d)=(S^1,\ldots, S^d)$ (the latter representation is sometimes used for clarity). Furthermore, there is one absorbing state, $S^{abs}$. The \textit{exit probabilities} represent the probability of leaving from the transient states into the absorbing state, and are given by $t=(\textbf{I}-\textbf{T})\textbf{e}$, where $e=(1,1,\ldots,1)^\top$.

Let furthermore $\pi\in[0,1]^d$ 
denote a vector of starting probabilities. According to  \cite{BladtBook}, the density of the number of transitions $\tau$ until reaching the absorbing state is given as
\begin{equation}\label{pdfDPH}
  f(n):= P(\tau=n) =
  \bm{\pi} \bm{T}^{n-1} \bm{t}, quad n\in\mathbb N_0
\end{equation}
 and denoted as $\tau\sim DPH(\pi,\textbf{T})$.

Sometimes, interest is not only in the time until absorption but rather in the time spent in different states during the process. That is, either only a subset of states is considered noteworthy, or the states have different levels of importance. One may introduce rewards to carry out more sophisticated counting schemes.
They may be coded by a diagonal reward matrix $\Delta$,  where the diagonal entry $\Delta_{ii}=r_i\in\mathbb N_0$ indicates the reward of the corresponding state $S^i$. For simplicity, we use $S^i$ and $i$ interchangeably.
Let $S_t$ be the state of the process at time $t$. Then $\tilde{\tau} = \sum\limits_{t=1}^\tau r_{S_t}$ is the sum of rewards of the visited states until absorption. 
If rewards are nonnegative integer-valued, it has been shown that  $\tilde{\tau}$ also follows a DPH distribution, see \cite{hobolth24}.

Several reward schemes can be considered simultaneously, leading to multivariate discrete Phase-Type (MDPH) distributions \cite{navarro}. 
We denote such as situation by $\textbf{Y}\sim MDPH(\boldsymbol\pi,\textbf{T},\textbf{R})$, with $\textbf{Y}\in\mathbb N_0^{n\times k}$ and $\textbf{R}\in(\mathbb N_0)^{d\times k}$, where $k$ is the number of different measurements. The entries $R_{ij}$ then denote the reward that state $i$ gets in terms of measurement $j$, and $R_{\cdot j}$ would correspond to the diagonal of the $j-th$ observations' reward matrix.

 The main contribution of this paper is to generalize the reward system by proposing \textit{random rewards}, which allow the rewards of a state to follow either a Bernoulli or a geometric distribution. At each visit to a state, a random quantity is drawn from the corresponding distribution, and the sum of these emissions is observed. Previously, rewards were treated as fixed quantities. We show that for suitable reward distributions, random rewards also lead to a DPH--distribution.
 We call the resulting distribution {\em random reward discrete phase type}, or RRDPH.
 Theoretical results and inference methods that already exist for DPH also apply to RRDPH.

 Using the EM algorithm proposed by \cite{He2016Sep}, both the reward probabilities of certain states as well as parameters describing the transition matrix \textbf{T} may be estimated. Our simulations confirm that the true parameters can be recovered given a sufficiently large sample size.

 We also consider the bivariate case where two variables are observed. One is the number of accumulated rewards, denoted as $\psi\in\mathbb N_0$. The other is the time $\tau\in\mathbb N$ the process took (i.e., the number of steps until the absorbing state is reached). We show that the tuple of $(\psi,\tau-\psi)$ follows an MDPH also under suitably distributed random rewards.

 Furthermore, we consider the Inertia-Escalation Model (IEM). Many real-life processes are modeled using distinct, discrete stages of escalation. Insurance claim frequency, for example, is often studied using Hidden Markov Models or other severity models (\cite{Gao2021Apr}, \cite{VERSCHUREN2022379}, \cite{Park2023Jul}). However, such models can also be applied in a variety of other situations, ranging from modeling the severity of a conflict (\cite{stoehr-etal-2023-ordinal}, \cite{williams24}) to psychometric latent, ordinal indicators, such as trustworthiness or frailty (\cite{Nguyen2014}, \cite{Wolff2024Apr}).

In such situations, different measurements may be observable. When modeling conflicts, one observes not only the time until the conflict ends but also the number of casualties or the number of ceasefires. Similarly, in claim frequencies, both the frequency and the severity are possible outcomes (\cite{Janousek2025Jun}). A multivariate model permits the investigation of dependencies in such situations.

We also consider parameter inference for random-reward IEMs using different observables. Both the model and the random rewards are extensively tested in simulations and applied to two datasets: Telco customer churn and a historical conflict data set.

This paper will first extend the DPH framework by considering random rewards in section \ref{rrSec}. This is followed by proposing the IEM in section \ref{iemSec}, a discussion of parameter inference in \ref{inferenceSec}, simulations supporting the validity of the inference in section \ref{simSec}, and an application to the famous Telco customer churn data and historical warfare data in section \ref{ApplSec}. Concluding remarks and a research outlook can be found in section \ref{discSec}.

\section{Random Rewards}\label{rrSec}

This section will introduce random rewards and discuss how they can be embedded into a discrete Phase-Type framework. Assuming a suitable reward distribution, such models inherit the convenient properties of DPH, including closed-form solutions for the likelihood and moments, as well as software implementations. 
The resulting Random Reward Discrete Phase-Type (RRDPH) distributions
will also be studied in a multivariate setting. Random rewards can be added to any DPH process (including the IEM proposed in section \ref{iemSec}).

In this paper, we show in particular that two reward
distributions permit us to stay in a DPH framework. Bernoulli rewards (see section \ref{berRewSec}), and geometric rewards (see section \ref{geoRewSec}). 

\subsection{Discrete Phase-Type Distributions with Bernoulli Rewards}\label{berRewSec}

To explain the usage of  Bernoulli rewards, consider the following simple example: We take four transient states, A, B, C, and D, where C and D are connected to the absorbing state as shown in Figure \ref{fig:exDPH} (a). While states A and B have a fixed reward of 1, state C has a reward that is Bernoulli distributed with parameter $p_C:=0.6$, and state D has a Bernoulli distributed reward with $p_D:=0.3$. The process starts in A and goes to B or C with probabilities $b:=\frac12$ and $1-b=\frac12$, respectively. From B, a transition to D takes place. Finally, from C and D, the absorbing state will be reached in the next step. The random reward means that if a transition from A to C occurs, the accumulated reward is 2 with probability 0.6 and 1 with probability 0.4. In the $A\rightarrow B$ case, the accumulated rewards will be 2 with probability 0.7 and 3 with probability 0.3. If all rewards are fixed to one, the reward-weighted absorption time would be 

\begin{equation}
\tilde{\tau}\sim DPH\left((1,0,0,0),\left(\begin{array}{cccc}
     0&\frac12&\frac12&0  \\
     0&0&0&1\\
     0&0&0&0\\0&0&0&0
\end{array}\right)\right).\end{equation}\label{exampRRDPH}

To model a Bernoulli reward for state C, we replace C by two distinct states: $C_0$, which corresponds to state C under the Bernoulli reward of 0, and $C_1$, where the reward is 1. Now states  A, B, and $C_1$ are assigned the fixed reward 1, whereas $C_0$ is assigned $0$. The  transition probability for $A\rightarrow C_1$ is $\hat{\textbf{T}}_{AC_1} = \textbf{T}_{AC}\times p_C = 0.3$. Furthermore $\hat{\textbf{T}}_{AC_0}  = \textbf{T}_{AC}*(1-p_C) = 0.2$.

The same can be done for D. The nonrewarded state, $D0$, is visited with $1-p_D=0.7$. Therefore, $D1$ is visited with $p_D=0.3$. By introducing these new states, we transform the chain with a Bernoulli reward into a fixed-reward chain. The resulting chain is displayed in Figure \ref{fig:exDPH} (b).
\begin{figure}
    \centering
    \begin{subfigure}{.5\textwidth}
        \centering
        \includegraphics[width=1.2\linewidth]{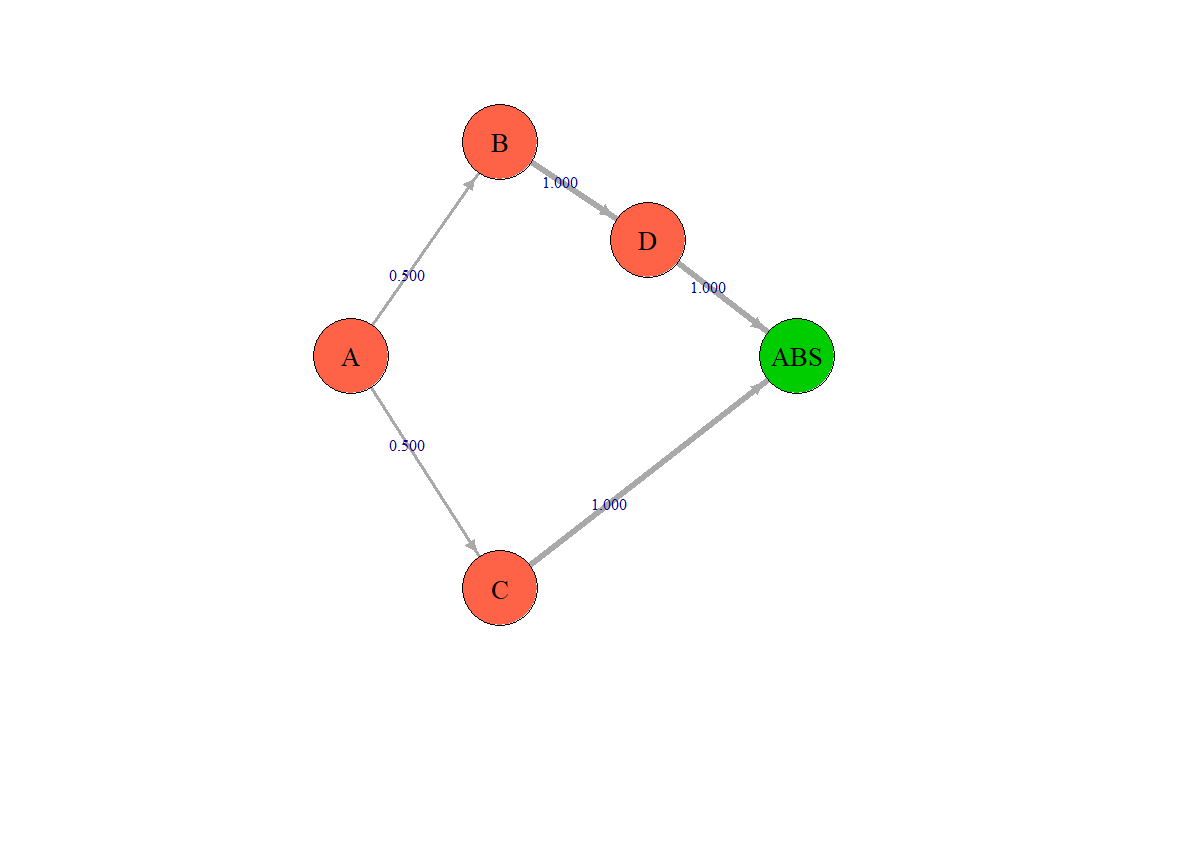}
        \caption{Untransformed DPH}

    \end{subfigure}%
        \begin{subfigure}{.5\textwidth}
    \centering
    \includegraphics[width=1.2\linewidth]{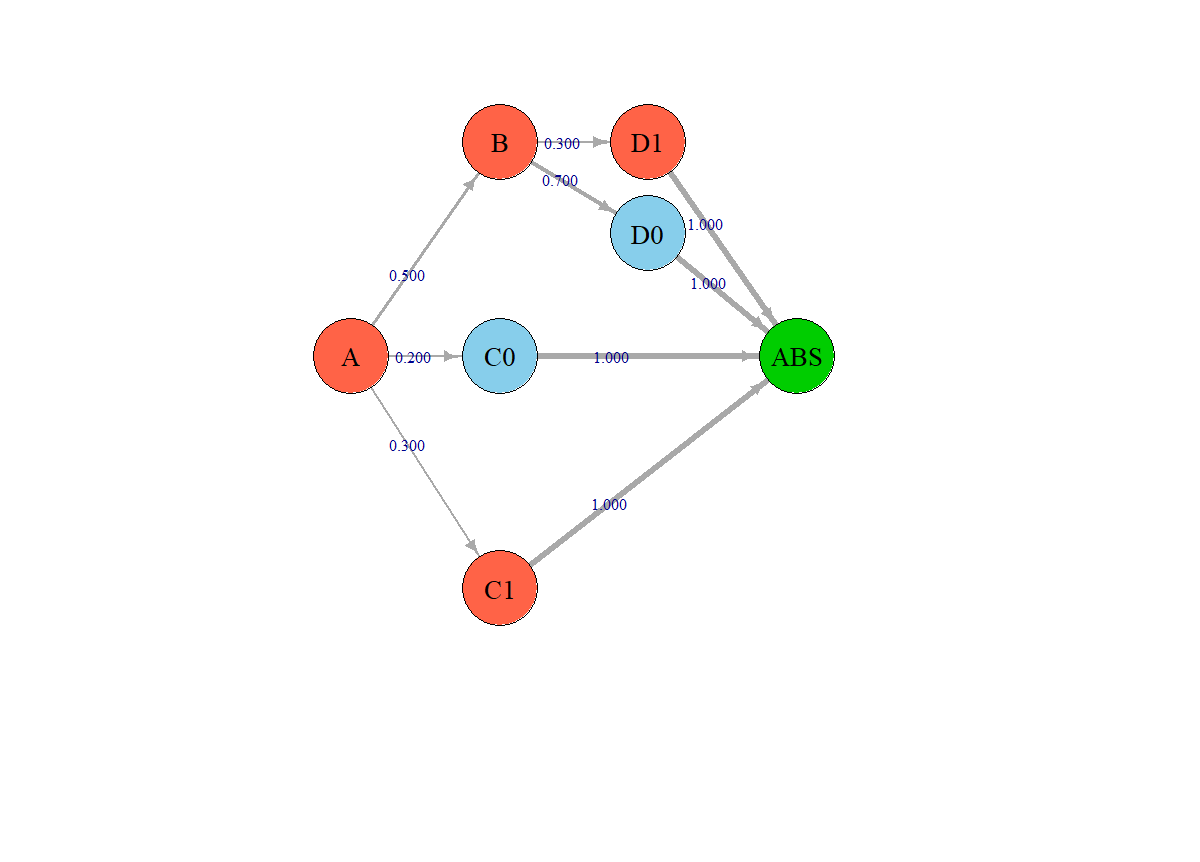}
    \caption{Transformed DPH}
    \label{berToyExNEW}
    \end{subfigure}
    \caption{Original stochastic system (a), followed by a split as $r(C)\sim Ber(0.6)$ and $r(D)\sim Ber(0.3)$ (b). Red states have a reward of 1; the blue states $C_0$ and $D_0$ have a reward of 0 in the Bernoulli reward version.}
    \label{fig:exDPH}
\end{figure}

Generalizing from this to a system where every state has Bernoulli rewards, let \textbf{T} be the subtransition matrix of a DPH and $\textbf{p}\in [0,1]^d$ be the probabilities of rewards of each stage. Further, let \textbf{P} be a diagonal matrix with diagonal entries $P_{ii}=p_i$, then define the extended matrix 
\begin{equation}\label{foldTrmatBer}
    B_b(\textbf{T},\textbf{P})=\left(\begin{array}{cc}
         \textbf{T}(\textbf{I}-\textbf{P}) &\textbf{T}\textbf{P} \\
         \textbf{T}(\textbf{I}-\textbf{P}) &\textbf{T}\textbf{P}
    \end{array}\right).
\end{equation}

The general idea is the same as in the previous example. All states are split into two: one with reward 1 and the other with reward 0. The transition probabilities are multiplied by the corresponding Bernoulli parameters. 
A state with a non-random reward receives a Bernoulli parameter $p_j=1$. 
Since a transition decision precedes the reward allocation, the transition probabilities of (splitted) states $j$ are multiplied with $p_j$ at the right-hand side of (\ref{foldTrmatBer})
and with $1-p_j$ at the left-hand side.
As the process may start in either a rewarded or an unrewarded state, the starting distribution is also altered to be  \begin{equation}\label{berNewIniEq}\boldsymbol\beta(\boldsymbol\pi,\textbf{P}) = ((\textbf{I}-\textbf{P})\boldsymbol{\pi},\textbf{P}\boldsymbol{\pi})^\top.\end{equation}

The accumulated rewards correspond to observations using the reward diagonal $\textbf{r}_1 = (\textbf{0}_d,\textbf{1}_d)^\top$. 
Thus, for each pair of new states produced by a split, one receives a reward of zero, and the other receives a reward of one.

We consider two potential observations, the number
of rewarded steps $\psi$, and the total number of steps
until absorption $\tau$. To count the total number of steps, all rewards are set to one. Alternatively, one may also count the non-rewarded steps $\tau-\psi$ using
an obvious choice of rewards.

Together, the observations $(\psi,\tau-\psi)$ may be used to estimate model parameters. The inference algorithms are presented in section \ref{inferenceSec}.

\begin{definition}\label{berRewDef}
    An observation $\textbf{Y}=(Y_1, Y_2)\in\mathbb N_0^2$ is said to come from a random reward DPH with Bernoulli rewards, denoted $\textbf{Y}\sim \mathrm{RRDPH}_B(\boldsymbol\pi,\textbf{T},\textbf{P},
    (\textbf{r}_1,\textbf{r}_2))$ if $\textbf{Y}\sim MDPH\left(\boldsymbol\beta(\boldsymbol\pi,\textbf{P}),\textbf{B}_b(\textbf{T},\textbf{P}),(\textbf{r}_1,\textbf{r}_2)\right)$, with $\boldsymbol\beta(\boldsymbol\pi,\textbf{P})$ as in \eqref{berNewIniEq} and $\textbf{B}_b(\textbf{T},\textbf{P})$ as in \eqref{foldTrmatBer}.
\end{definition}

Theorem \ref{berRewTheo} shows that $\mathrm{RRDPH}_B$ provides the distribution of accumulated Bernoulli rewards $(\psi,\tau-\psi)$ that are emitted by a Markov chain. The proof is in the appendix.

\begin{theorem}\label{berRewTheo}
Let $T\in[0,1]^{d\times d}$ be the sub-transition matrix of a DPH with initial probabilities $\pi\in\mathbb{R}^d$ and absorption vector $\textbf{t}=(\textbf{I}_d-\textbf{T})\textbf{e}$.  Let $\textbf{p}=(p_1,\dots,p_d)^\top\in[0,1]^d$ and $\textbf{P}=\text{diag}(p_1,\dots,p_d)$.  Consider a model where on each visit to state $i$ an independent $\mathrm{Bernoulli}(p_i)$ determines whether the visit is rewarded, and let $X=(\psi,\tau-\psi)$ denote the rewarded and unrewarded counts, respectively (so $\psi$ is the number of rewarded visits and $\tau$ the absorption time). Let $\textbf{Y}=(Y_1,Y_2)\sim \mathrm{RRDPH}_B(\boldsymbol\pi,\textbf{T},\textbf{P},(\textbf{r}_1,\textbf{r}_2))$ as in definition \ref{berRewDef}, and with
$\textbf{r}_1 = (\textbf{0}_d,\textbf{1}_d)^\top$,
and $\textbf{r}_2 = (\textbf{1}_d,\textbf{0}_d)^\top$.
Then \textbf{X} and \textbf{Y} have the same distribution.
\end{theorem}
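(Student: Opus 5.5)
We couple the two processes on one probability space by building an explicit Markov chain on the split state space. Let $(S_t)_{t=1}^{\tau}$ be the original chain with initial law $\boldsymbol\pi$ and sub-transition matrix $\textbf{T}$. Let $(U_t)_{t\ge1}$ be the reward indicators: conditionally on the path, they are independent with $U_t\sim\mathrm{Bernoulli}(p_{S_t})$. Define the augmented process $Z_t=(S_t,U_t)$ on the $2d$ transient states $\{(i,0)\}\cup\{(i,1)\}$, ordered as in \eqref{foldTrmatBer}: first the unrewarded copies, then the rewarded ones. Send $Z$ to $S^{abs}$ exactly when $S$ is absorbed. In terms of $Z$ we have $\psi=\sum_{t=1}^{\tau}U_t=\sum_t (\textbf{r}_1)_{Z_t}$ and $\tau-\psi=\sum_t(\textbf{r}_2)_{Z_t}$, because $\textbf{r}_1$ gives reward one exactly to the rewarded copies and $\textbf{r}_2$ exactly to the unrewarded ones. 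So it suffices to show that $Z$ is a Markov chain with initial law $\boldsymbol\beta(\boldsymbol\pi,\textbf{P})$ and sub-transition matrix $\textbf{B}_b(\textbf{T},\textbf{P})$. Then $X$ is by definition the MDPH vector $\textbf{Y}$ of Definition \ref{berRewDef}.

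The key step is computing the path probabilities of $Z$. For any $n$ and states $(i_1,u_1),\dots,(i_n,u_n)$, the joint construction gives
\begin{equation*}
P\bigl(Z_1=(i_1,u_1),\dots,Z_n=(i_n,u_n),\tau=n\bigr)=\pi_{i_1}\prod_{s=2}^{n}T_{i_{s-1}i_s}\; t_{i_n}\prod_{s=1}^{n}p_{i_s}^{u_s}(1-p_{i_s})^{1-u_s}.
\end{equation*}
We regroup each Bernoulli factor with the transition that enters state $i_s$. The first factor combines with $\pi_{i_1}$ to give $\beta_{(i_1,u_1)}$, which equals $(1-p_{i_1})\pi_{i_1}$ or $p_{i_1}\pi_{i_1}$. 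For $s\ge2$, the factor combines with $T_{i_{s-1}i_s}$ to give $(\textbf{T}(\textbf{I}-\textbf{P}))_{i_{s-1}i_s}$ or $(\textbf{T}\textbf{P})_{i_{s-1}i_s}$. These entries do not depend on $u_{s-1}$, which is why the two block rows of $\textbf{B}_b$ are identical. The exit probability $t_{i_n}$ also does not depend on $u_n$. So the path probability is exactly the Markov-chain path probability under $(\boldsymbol\beta,\textbf{B}_b,\,\textbf{t}\text{ duplicated})$.

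We also check that the exit vector matches. Since $\textbf{T}(\textbf{I}-\textbf{P})\textbf{e}+\textbf{T}\textbf{P}\textbf{e}=\textbf{T}\textbf{e}$, we get $(\textbf{I}-\textbf{B}_b)\textbf{e}=(\textbf{t},\textbf{t})^\top$, so the implied absorption probabilities are consistent. Summing the path probabilities over all paths with fixed value of $(\sum_s u_s,\,n-\sum_s u_s)$ gives equality of the joint pmfs of $X$ and $\textbf{Y}$.

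The main obstacle is bookkeeping rather than depth. The Bernoulli factor of state $i_s$ must be attached to the incoming transition, and $\textbf{B}_b$ uses a right multiplication by $\textbf{P}$; this pairing must be matched correctly. One must also confirm that the paper's ordering convention (unrewarded block first) is consistent with $\boldsymbol\beta$ and with $\textbf{r}_1=(\textbf{0}_d,\textbf{1}_d)^\top$. We will also note the convention that the initial state counts as visit $1$, as in $\tilde\tau=\sum_{t=1}^\tau r_{S_t}$.
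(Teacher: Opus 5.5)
Your proof is correct, but it takes a different route from the paper. The paper works with generating functions. It writes the joint PGF of $\textbf{Y}$ via the MDPH formula $\boldsymbol\beta^\top\Delta(\textbf{I}_{2d}-\textbf{B}_b\Delta)^{-1}\textbf{b}$ and solves the $2\times2$ block system: subtracting the block equations gives $\textbf{u}_1=\textbf{u}_2$, and adding them gives $(\textbf{I}_d-\textbf{TD})\textbf{u}_1=\textbf{t}$ with $\textbf{D}=(\textbf{I}-\textbf{P})\theta_2+\textbf{P}\theta_1$. It then matches this to the PGF $\sum_{n\ge0}\boldsymbol\pi^\top\textbf{D}(\textbf{TD})^n\textbf{t}$ of $\textbf{X}$. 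You instead couple the two models. Attaching each Bernoulli factor to the transition entering its state, you show that $(S_t,U_t)$ has exactly the path probabilities of the chain $(\boldsymbol\beta,\textbf{B}_b)$ with exit vector $(\textbf{t},\textbf{t})$.

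Your argument is more elementary and proves more: it gives equality of the laws of the whole split path, so it also covers any other reward vectors on the split chain. It explains why the two block rows of $\textbf{B}_b$ coincide. It also avoids the MDPH PGF formula, the matrix inversion, and the Neumann-series convergence step. That last step really needs the spectral radius of $\textbf{T}$ to be below one, because the quoted bound $\|\textbf{T}\|<1$ fails in the max-row-sum norm whenever a row of $\textbf{T}$ sums to one.

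The paper's route, in turn, yields the closed form $\boldsymbol\pi^\top\textbf{D}(\textbf{I}_d-\textbf{TD})^{-1}\textbf{t}$ of the joint PGF, which is convenient for moments. It also gives an algebraic template that the paper reuses for the geometric case (Theorem~\ref{geoRewTheo}). There, a pathwise matching would have to map each original step plus its reward onto a variable-length run of augmented steps.
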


\subsection{Discrete Phase-Type Distributions with Geometric Rewards}\label{geoRewSec}

Geometric rewards can be incorporated into a DPH framework by exploiting the memoryless property of the geometric distribution. In contrast to Bernoulli rewards, the reward accumulated in a state is not decided by a single success/failure draw, but by repeated Bernoulli trials until the first failure. This allows us to represent geometric rewards by an expanded state space while preserving the phase-type structure.

To motivate the expanded matrix, we consider a simple fixed reward model shown in Figure \ref{geoToyEx}. If a random reward is assigned to $C$,  with a geometric reward distribution $r(C)\sim Geom(q)$, this may be implemented
by splitting $ C$ into two states, $C_0$ and $C_1$ as shown in 
Figure \ref{geoSplitPic}.
Unlike the Bernoulli case, $C_0$ and $C_1$ are not mutually exclusive; Instead, reaching $C_0$ assigns a reward of zero. By moving to $C_1$, rewards are accumulated until this state is left. Thus, $P(A\rightarrow C_0)=P(A\rightarrow C)$, after which $P(C_0\rightarrow C_1)=1-q$. If $C_1$ is reached, the rewards keep accumulating (each transition happening with $P(C_1\rightarrow C_1)=1-q$) until a failure occurs
with probability $q$. In our example, failure corresponds to the transition $C_1\rightarrow ABS$. In summary, the geometric reward is represented by the number of repeated self-transitions before the process leaves the reward-accumulating state.

\begin{figure}
\begin{subfigure}{.5\textwidth}
        \centering
        \includegraphics[width=1.2\linewidth]{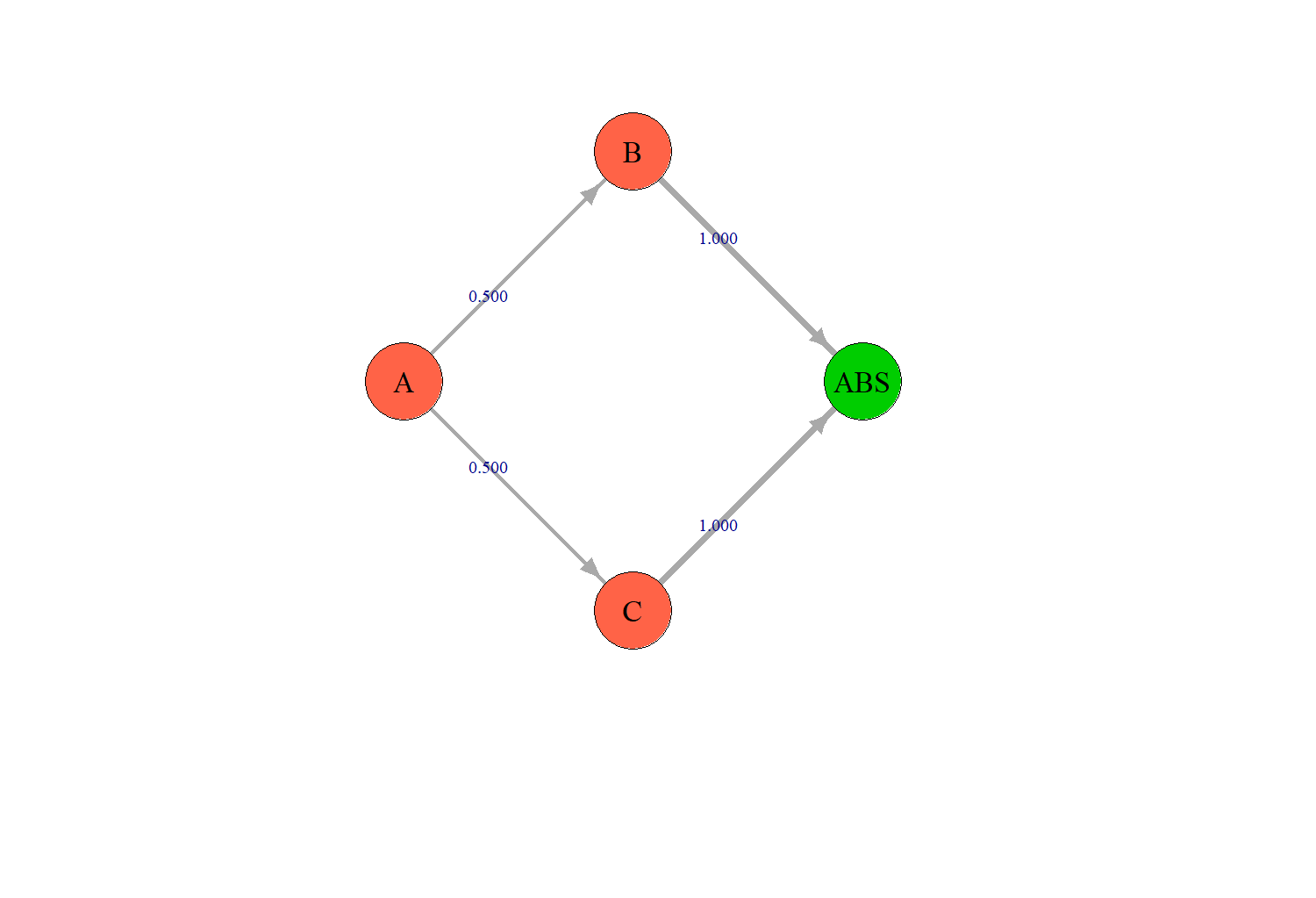}
        \caption{Untransformed DPH}
        \label{geoToyEx}
        
    \end{subfigure}%
        \begin{subfigure}{.5\textwidth}
    \centering
    \includegraphics[width=1.2\linewidth]{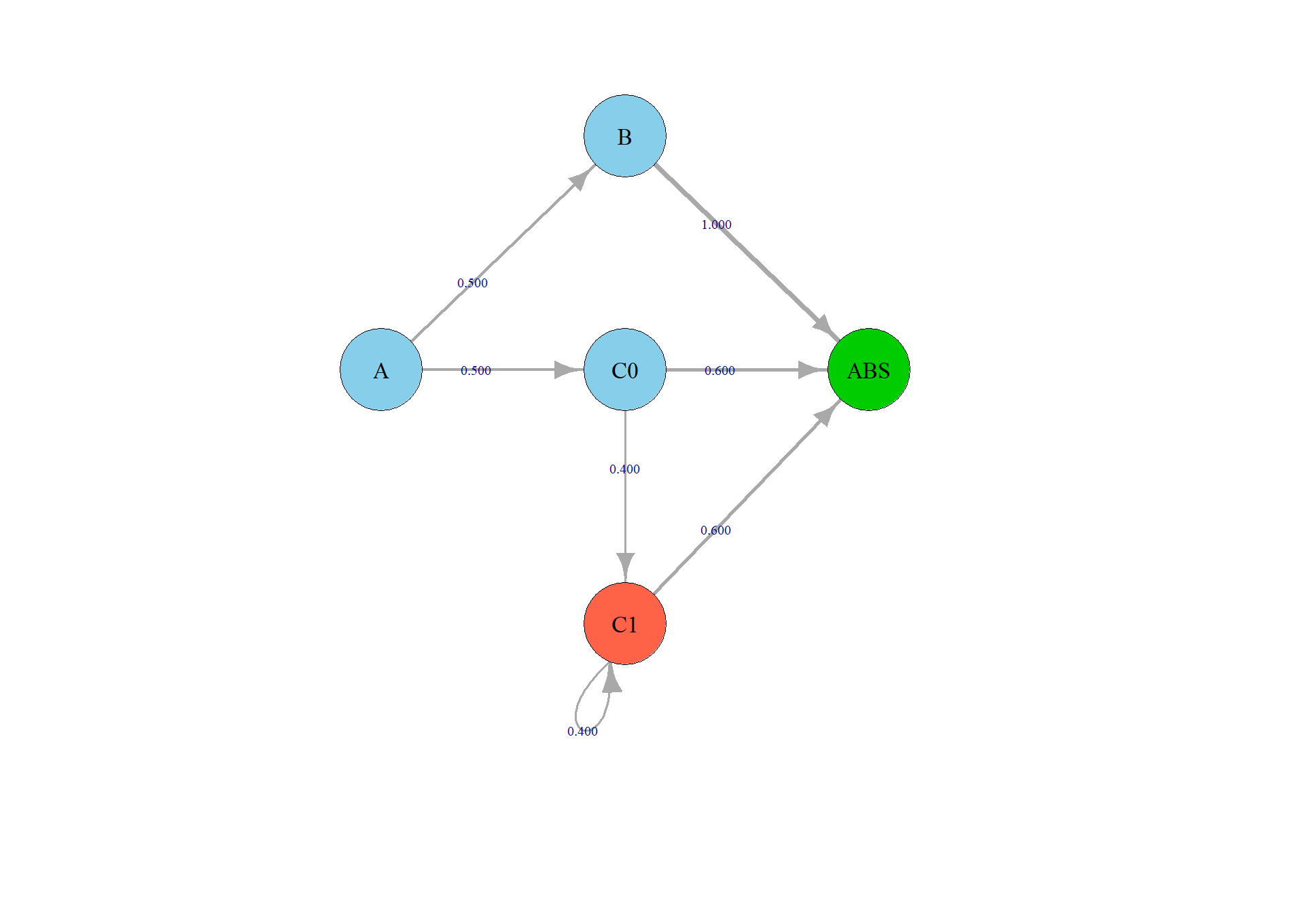}
    \caption{Split system as $r(C)\sim Geo(0.6)$, red state $C_1$ captures random rewards, blue ones the number of transitions to absorption}
    \label{geoSplitPic}
    \end{subfigure}
\end{figure}

This construction extends to a general DPH with sub-transition matrix \textbf{T}. Denote the reward probabilities for state $i$ as $q_i$. The reward probabilities for each state are then given by the diagonal matrix $\textbf{Q}=diag(\textbf{q})$ with vector $\textbf{q}\in(0,1]^d$. Note that setting $q_i=1$ gives a degenerate geometric reward at zero, so non-rewarded states are included as a special case. The expanded sub-transition matrix is then

\begin{equation}\label{exMatGeo}
    B_g(\textbf{T},\textbf{Q})=\left(\begin{array}{cc}
         \textbf{Q}\textbf{T}&\textbf{I}-\textbf{Q}  \\
         \textbf{Q}\textbf{T}&\textbf{I}-\textbf{Q}
    \end{array}\right).
\end{equation}\label{foldTrmatGeo}

 This is again used in conjunction with the reward vector $\textbf{r}_1 = (\textbf{0}_d,\textbf{1}_d)^\top$ for the total reward and $\textbf{r}_2 = (\textbf{1}_d,\textbf{0}_d)^\top$ for the time.

The process cannot start in a reward-accumulating state, so the initial distribution becomes
 $\boldsymbol\beta=(\boldsymbol\pi,\textbf{0}_{d})^\top$. Importantly, the exit vector changes to $\textbf{b}=(\textbf{Qt},\textbf{Qt})$ (as the matrix blocks can be added up, leaving $(\textbf{I}-\textbf{QT}-(\textbf{I}-\textbf{Q}))\textbf{e}=\textbf{Q}(\textbf{I}-\textbf{T})\textbf{e}=\textbf{Qt}$). We therefore define the geometric random-reward DPH as follows.

\begin{definition}\label{geoRewDef}
    An observation $\textbf{Y}=(Y_1, Y_2)\in\mathbb N_0^2$ is said to come from a random reward DPH with geometric rewards, denoted $\textbf{Y}\sim \mathrm{RRDPH}_G(\boldsymbol\pi,\textbf{T},\textbf{P},
    (\textbf{r}_1,\textbf{r}_2))$ if $\textbf{Y}\sim MDPH(\boldsymbol\beta(\boldsymbol\pi),\textbf{B}_g(\textbf{T},\textbf{P}),(\textbf{r}_1,\textbf{r}_2))$, with $\boldsymbol\beta(\boldsymbol\pi)=(\boldsymbol\pi,\textbf{0})$ and $\textbf{B}_g(\textbf{T},\textbf{P})$ as in \eqref{foldTrmatGeo}.
\end{definition}

Informally speaking, the difference from the Bernoulli case is that time is halted while rewards are accumulated, as exemplified by the missing \textbf{T} on the right-hand side of the transition matrix. Thus, a transition between states of the original matrix \textbf{T} can only happen when rewards have stopped, e.g., when one geometric reward has concluded. The matrix order is also different: In the Bernoulli case, it is \textit{first} decided which state in \textbf{T} the transition occurs to, and whether it is rewarded comes \textit{second}. Meanwhile, the geometric RRDPH does \textit{first} check if the reward cycle has concluded, and if so, the \textit{second} step is a transition along the states in \textbf{T}.
Furthermore, this differs from the Bernoulli version, in which the unrewarded state is entered regardless. This makes it easier to separate time from rewards. 

Theorem \ref{geoRewTheo} states that $\mathrm{RRDPH}_G$ provides a valid distribution for the accumulated geometric rewards emitted by a DPH process. The proof can again be found in the appendix.

\begin{theorem}\label{geoRewTheo}
    Let $T\in[0,1]^{d\times d}$ be the sub-transition matrix of a DPH with initial probabilities $\pi\in\mathbb{R}^d$ and absorption vector $\textbf{t}=(\textbf{I}_d-\textbf{T})\textbf{e}$.  Let $\textbf{q}=(q_1,\dots,q_d)^\top$ and $\textbf{Q}=\text{diag}(q_1,\dots,q_d)$.  Consider a model where on each visit to state $i$ an independent $\mathrm{Geometric}(q_i)$ determines the reward of that visit, and let $X=(\psi,\tau)$ denote the accumulated rewards and transitions until absorption, respectively.  Let $Y=(Y_1,Y_2)\sim \mathrm{RRDPH}_G(\boldsymbol\pi,\textbf{T},\textbf{Q},(\textbf{r}_1,\textbf{r}_2))$
    with $\textbf{r}_1 = (\textbf{0}_d,\textbf{1}_d)^\top$
    and $\textbf{r}_2 = (\textbf{1}_d,\textbf{0}_d)^\top$.
     Then \textbf{X} and \textbf{Y} have the same distribution.
\end{theorem}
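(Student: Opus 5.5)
We prove Theorem \ref{geoRewTheo} by an explicit coupling of sample paths, or equivalently by comparing joint probability generating functions. Throughout, $\mathrm{Geometric}(q_i)$ is read as the number of failures before the first success, so $P(G=k)=(1-q_i)^k q_i$ for $k\in\mathbb N_0$. This matches $q_i=1$ giving a degenerate reward at zero. We write the states of the expanded chain as $(i,0)$ for the first block and $(i,1)$ for the second block.

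\emph{Step 1: describe the expanded chain.} Under $B_g(\textbf{T},\textbf{Q})$, a chain in $(i,0)$ or $(i,1)$ does one of two things:
\begin{itemize}
\item with probability $1-q_i$ it moves to $(i,1)$;
\item with probability $q_i$ it performs a ``$\textbf{T}$-step'', going to $(j,0)$ with probability $T_{ij}$ or being absorbed with probability $t_i$.
\end{itemize}
The row sums are consistent with the exit vector $\textbf{b}=(\textbf{Qt},\textbf{Qt})$ computed before Definition \ref{geoRewDef}. The starting law $(\boldsymbol\pi,\textbf{0})$ puts the chain in block $0$ initially.

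\emph{Step 2: decompose paths into excursions.} Every path of the expanded chain splits uniquely into consecutive excursions. Each excursion enters some $(i,0)$, spends $G\ge 0$ steps in $(i,1)$, and then leaves by a $\textbf{T}$-step. Because the decision at $(i,0)$ and at $(i,1)$ has the same law, the number of visits to $(i,1)$ in one excursion has law $P(G=k)=(1-q_i)^kq_i$, which is $\mathrm{Geometric}(q_i)$. The exit at the end of the excursion is drawn from row $i$ of $(\textbf{T},\textbf{t})$, and the Markov property makes it independent of $G$ and of the past.

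Projecting the path onto the first coordinate, and recording only the visits to block $0$, therefore gives a Markov chain with initial law $\boldsymbol\pi$ and sub-transition matrix $\textbf{T}$. Its absorption time is $\tau$, the number of excursions. The excursion lengths $G_1,\dots,G_\tau$ are, conditionally on the projected path, independent with laws $\mathrm{Geometric}(q_{S_1}),\dots,\mathrm{Geometric}(q_{S_\tau})$. This is exactly the model in the statement.

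\emph{Step 3: read off the rewards.} With $\textbf{r}_2=(\textbf{1}_d,\textbf{0}_d)^\top$, $Y_2$ counts the visits to block $0$, so $Y_2=\tau$. With $\textbf{r}_1=(\textbf{0}_d,\textbf{1}_d)^\top$, $Y_1$ counts the visits to block $1$, so $Y_1=\sum_{k=1}^{\tau} G_k=\psi$. Hence $\textbf{Y}$ and $\textbf{X}$ have the same distribution.

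\emph{Alternative check by generating functions.} Let $\boldsymbol\phi(z,w)$ be the vector of joint PGFs of $(\psi,\tau)$ started from each state. First-step analysis gives
\[
\boldsymbol\phi=w\,\textbf{D}(z)(\textbf{T}\boldsymbol\phi+\textbf{t}),\qquad \textbf{D}(z)=\mathrm{diag}\!\left(\frac{q_i}{1-(1-q_i)z}\right).
\]
For the expanded chain, eliminating block $1$ by solving the geometric self-loop gives the same system. This route avoids any measure-theoretic coupling language.

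The main obstacle is keeping the conventions straight. We must check that the self-loop in the second block, entered from the first block with the same probability $1-q_i$, produces $\mathrm{Geometric}(q_i)$ on $\mathbb N_0$ rather than on $\mathbb N$. We must also confirm that the $\textbf{T}$-step probabilities are multiplied by $q_i$ from the left, i.e. the factor is indexed by the current state; this is why $\textbf{QT}$ appears rather than $\textbf{TQ}$.
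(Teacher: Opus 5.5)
Your proof is correct, but it takes a different route from the paper.

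The paper argues purely algebraically. It writes the MDPH generating function $\boldsymbol\beta^\top\Delta(\textbf{I}_{2d}-\textbf{B}_g\Delta)^{-1}\textbf{b}$, uses the identical block rows to get $\textbf{u}_1=\textbf{u}_2$, and reduces to the resolvent $(\textbf{I}_d-\textbf{QT}\theta_2-(\textbf{I}-\textbf{Q})\theta_1)^{-1}\textbf{Qt}$. It then matches this with $\boldsymbol\pi^\top\textbf{D}(\textbf{I}_d-\textbf{TD})^{-1}\textbf{t}$ through a Woodbury-type identity.

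You instead decompose each trajectory of the expanded chain into excursions: entry into $(i,0)$, a $\mathrm{Geometric}(q_i)$ run in $(i,1)$, then a $\textbf{T}$-step independent of the run length. This identifies the joint law of the projected path and the per-visit rewards, and only then reads off $(\psi,\tau)$.

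Your route is more transparent and proves more than the theorem: equality in law at the level of trajectories, not just of the two totals. That is also what justifies reading expanded-chain counts in the EM appendix (such as $F_j$ and $U_j$) as reward failures and opportunities of the original model. The paper's computation parallels its Bernoulli proof and needs only the known MDPH PGF formula, at the cost of matrix bookkeeping.

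Your first-step check is in fact a shorter algebraic proof than the paper's. Let $\boldsymbol\chi=\Delta\textbf{u}$ collect the generating functions started from each expanded state, with $\boldsymbol\chi_0$ its block-$0$ part. Eliminating the block-$1$ unknowns in $\boldsymbol\chi=\Delta(\textbf{B}_g\boldsymbol\chi+\textbf{b})$ gives $\boldsymbol\chi_0=\textbf{D}(\textbf{T}\boldsymbol\chi_0+\textbf{t})$ directly, with the paper's $\textbf{D}=\textbf{Q}\theta_2(\textbf{I}-(\textbf{I}-\textbf{Q})\theta_1)^{-1}$. Uniqueness follows from $\rho(\textbf{DT})\le\rho(\textbf{T})<1$ for $\theta_1,\theta_2\in[0,1]$, so no Woodbury step is needed.

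One thing you should state explicitly: you need $q_i>0$ for every $i$, so that each excursion ends almost surely. This is the paper's standing assumption $\textbf{q}\in(0,1]^d$.
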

\addtocounter{theorem}{-2}

If one replaces random rewards with a fixed reward given by the mean of the reward distribution, this leads to quite a different stochastic behavior as shown in Figure \ref{distDiff}. 

\begin{figure}
    \centering
    \includegraphics[width=0.75\linewidth]{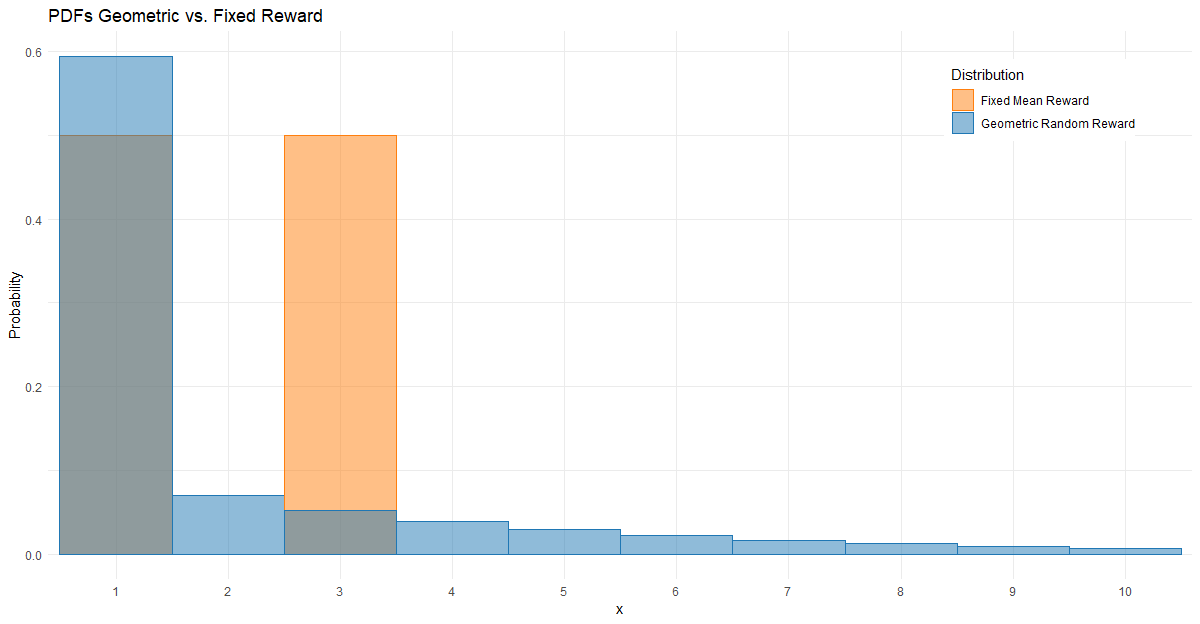}
    \caption{Random-Reward and Fixed-Reward model distributions for the geometric toy example as in Figure \ref{geoToyEx} and \ref{geoSplitPic}.
    The random reward PDF is $r(C)\sim Geom(\frac14)$. The fixed reward model assigns a reward of 3 to $C$, as 3 is the mean of a $Geom(\frac14)$ distribution. State B has  reward 1, A  0. }
    \label{distDiff}
\end{figure}

\section{Inertia-Escalation Model}\label{iemSec}

Imagine a latent process that can alter between $d\in\mathbb N$ different levels of severity. It is discrete in time, and with each time step, there are three possible transitions for a process with severity $i$: There might be \textit{de-escalation}: $i\rightarrow(i-1)$, \textit{inertia}: $i\rightarrow i$, or \textit{escalation}: $i\rightarrow (i+1)$. For a severity at time $t$, $X_t$, the probability of inertia (unchanging severity) is \[\nu=P(X_{t+1}=X_t),\]and, given that the severity changes, the probability of escalation is \[\eta=P(X_{t+1}=X_t+1\mid X_{t+1}\neq X_t).\]

The process finishes if de-escalation happens at severity 1 or escalation reaches severity d. For simplicity, both events will be treated as the same absorbing state here, meaning we do not distinguish between these two possible outcomes of the process. As each process will finish at some point, the absorption time $\tau$ can be modeled using a discrete Phase-Type (DPH) distribution with some starting vector $\boldsymbol\pi$ and sub-transition matrix

\begin{center}\begin{equation}\label{ieMod}
   \textbf{T}_{IE}=
        \bordermatrix{&\textbf{S}_1&\textbf{S}_2&\textbf{S}_3&...&\textbf{S}_{d-1}&S_d\cr        
         \textbf{S}_1&\nu&(1-\nu)\eta&0&...&0&0  \cr
         \textbf{S}_2&(1-\nu)(1-\eta)&\nu&(1-\nu)\eta&...&0&0\cr
         \textbf{S}_{...}&...&...&...&...&...&...\cr
         \textbf{S}_d&0&0&0&...&(1-\nu)(1-\eta)&\nu
    }.
\end{equation}\end{center}
\vspace{3mm}

One observable from the IEM will usually be the number of transitions $\tau\sim DPH(\boldsymbol\pi,\textbf{T}_{IE})$ until the absorbing state is reached (= the time the process took). We assume that at each transition into a state $X_t$, a Bernoulli/geometric reward $L_t\sim Ber(p_{X_t})/Geom(p_{X_t})$ is generated. After such a process has finished, the accumulation of these rewards, $\psi=\sum_{t=1}^\tau L_t$, can be observed. Multiple outcomes may also occur separately, and thus, the IEM might be expanded to a multivariate setup.

As an example of such a process, consider conflict analysis: The IEM would model the severity of the conflict, and $\tau$ would be the number of weeks until the conflict terminates. Additionally, one could observe the total number of weeks during which a ceasefire or a major battle occurred (both with Bernoulli rewards), as well as the total number of military and civilian deaths, and the total amount of ammunition expended. In the latter cases, it needs to be checked whether geometric rewards fit the data.

Here, we define the IEM to model only with geometric rewards. As such, references to IEM in the rest of the paper will adhere to the definition \ref{iemDef}.
We assume that $n$ observation pairs $Y_{i,}=(\tau_i,\psi_i)$ are available, with $1\le i\le n$ that 
consists of the times to absorption and the
accumulated rewards.

\begin{definition}\label{iemDef}
    An observation matrix $\textbf{Y}\in\mathbb R_n^2$ is said to have been generated by an Inertia Escalation Model (IEM) with parameters $\nu$,$\eta$, and \textbf{Q}, denoted \[\textbf{Y}\sim IEM(\nu,\eta,\textbf{Q}),\]  if for observation $i$ $\textbf{Y}_{i,}\sim \mathrm{RRDPH}_G(\textbf{e}_1,\textbf{T},\textbf{Q})$, where \textbf{T} is constructed as in \eqref{ieMod} using $\nu$ and $\eta$.
\end{definition}

\subsection{Inertia Escalation Model with Covariates}\label{regIEMsec}
In many applications, repeated observations from the same underlying Markov chain are unavailable, or the parameters governing the chain are expected to vary across subjects. To accommodate this, we extend the Inertia Escalation Model (IEM) by allowing the parameters $\nu$ and $\eta$ to depend on covariates. Since both parameters will lie in $(0,1)$, a logistic link is a natural choice.

Let $Y_i$ denote the observation vector for subject $i$, and let $X_i$ be the associated covariate vector. We write $\beta_\nu$ and $\beta_\eta$ for the corresponding regression coefficient vectors and propose the regression-adapted RRDPH in definition \ref{regIemDef},
where $\\\logit(x) = \log\!\left(\frac{x}{1-x}\right)$.

\begin{definition}[Regression IEM]\label{regIemDef}
An observation vector $Y \in \mathbb{R}_n^2$ is said to follow a regression IEM, denoted
\[
Y \sim \mathrm{IEM_R}(\beta_\nu,\beta_\eta,Q,X),
\]
if, for each observation $i$,
\[
\logit(\nu_i) = X_i^\top \beta_\nu,
\qquad
\logit(\eta_i) = X_i^\top \beta_\eta,
\]
and
\[
Y_i \sim \mathrm{RRDPH}_G(e_1, T_i, Q),
\]
where $T_i$ is the sub-transition matrix constructed as in \eqref{ieMod} using $\nu_i$ and $\eta_i$.
\end{definition}

Thus, the quantities that are estimated in the regression IEM are the regression coefficients $\beta_\nu$ and $\beta_\eta$, rather than a single common pair $(\nu,\eta)$. In the simplest case of a single numerical covariate, this replaces estimating two scalar parameters with estimating four coefficients (two intercepts and two slopes). More generally, if $r$ covariates are used, then the model contains $2(r+1)$ regression coefficients.

The reward probabilities may also be modeled using logistic regression. Instead of estimating the diagonal entries of $Q = \mathrm{diag}(q_1,\dots,q_d)$ freely, one may, for instance, impose a linear reward model across stages,
\[
\logit(q_i) = \beta_0^q + \beta_1^q\, i,
\qquad i=1,\dots,d.
\]
This reduces the reward component to two parameters and imposes a monotone escalation structure across the latent states, which is often easier to interpret.

\section{Statistical Inference}\label{inferenceSec}

Since the IEM is a special case of the MDPH distribution, estimation procedures can be obtained by specializing the EM algorithm of He \& Ren \cite{He2016Sep}.  The appendix provides the E-step and closed-form M-step updates; here, we state only the resulting estimation strategy. Note further that other versions of RRDPH that are not the IEM can use the EM algorithm analogously, only updating the maximization functions.

Throughout this section, we write
\[
Y_i \sim \mathrm{IEM}(\nu,\eta,\mathbf{Q}), \qquad i=1,\dots,n,
\]
where \(Y_i=(\tau_i,\psi_i)\), \(\tau_i\) denotes the absorption time, and \(\psi_i\) denotes the accumulated geometric reward.

Let \(X_i=(X_{i,0},X_{i,1},\dots,X_{i,\tau_i})\) denote the unobserved latent state sequence associated with observation \(Y_i\). 
The complete data are therefore given by the latent transitions and the latent reward counts induced by the geometric emissions. For \(1\leq j,k\leq d\), define the transition counts
\[
N^{(i)}_{jk}=\sum_{t=1}^{\tau_i}\mathbf{1}\{X_{i,t-1}=j,\;X_{i,t}=k\},
\]
and let \(N^{(i)}_{j0}\) denote the number of transitions from state \(j\) to absorption.

Given current parameter values \(\theta^{(m)}=(\nu^{(m)},\eta^{(m)},Q^{(m)})\), the E-step computes the conditional expectations
\[
\widehat N^{(i)}_{jk}
=
\mathbb{E}\!\left[N^{(i)}_{jk}\mid Y_i,\theta^{(m)}\right],
\qquad
\widehat N^{(i)}_{j0}
=
\mathbb{E}\!\left[N^{(i)}_{j0}\mid Y_i,\theta^{(m)}\right],
\]
together with the corresponding expected reward counts. These quantities are obtained from the forward recursion described in Appendix 10.3.

The expected complete-data log-likelihood can be separated into a transition part and a reward part, making the M-step straightforward. The updated parameter vector is obtained by
\[
\theta^{(m+1)}=\arg\max_{\theta}\;
\mathbb{E}\!\left[\ell_c(\theta)\mid Y_1,\dots,Y_n,\theta^{(m)}\right].
\]
Because the IEM transition matrix has only inertia and nearest-neighbor escalation/de-escalation moves, the updates for \(\nu\) and \(\eta\) are available in closed form. The geometric reward probabilities \(q_1,\dots,q_d\) are updated by the corresponding closed-form ratio of expected reward failures to expected opportunities, as derived in Appendix 10.3.

If some parameters are known in advance, the same EM framework applies: the fixed parameters are simply held constant during maximization, and only the unknown parameters are updated.

\paragraph{Regression IEM.} In the regression IEM introduced in section \ref{regIEMsec}, subject-specific inertia and escalation probabilities are modeled via logistic links. Given the E-step pseudo-counts, the M-step updates for $\beta_\nu$ and $\beta_\eta$
reduce to weighted quasibinomial GLMs. If the reward probabilities are modeled through the linear reward specification, the same weighted-GLM update applies to the reward coefficients.

\section{Simulation}\label{simSec}

To assess the finite-sample behavior of the proposed estimation procedures, we conducted simulation studies for the random-reward models introduced above. The first experiments revisit the Bernoulli and geometric toy examples from section \ref{rrSec}, where the goal is to recover all parameters under correctly specified data-generating mechanisms. These examples serve as a simple check that the random-reward construction and the EM algorithm work as intended in low-dimensional settings.

We then focus on the Inertia-Escalation Model. In particular, we study two variants of the regression-adapted IEM from Definition \ref{regIemDef}: one in which the reward probabilities are modeled by the linear reward specification introduced in Section \ref{regIEMsec}, and one in which the reward probabilities are estimated freely. This allows us to assess both the transition-parameter estimates and the reward-parameter estimates under the inference framework developed in Section \ref{inferenceSec}.

For the reward probabilities in the IEM, we used an underlying true model with an intercept of -3.064788 and a slope coefficient of 0.8675632, so that the reward probabilities for the first and fourth states are 0.1 and 0.6, respectively. The regression coefficients for $\nu$ and $\eta$ are given in Table \ref{iemCoefTab}. The covariates were sampled with replacement from a vector of (-10, 0, 5, 20). 

\begin{table}[ht]
\centering
\begin{tabular}{r|rr}
  \hline
 & $\beta_0$ & $\beta_1$  \\ 
  \hline
$\nu$ & -0.1 & 0.2  \\ \hline
$\eta$ & 0.1 & -0.25 \\ 
   \hline
\end{tabular}\caption{Regression Coefficients for simulations on $IEM_R$}\label{iemCoefTab}
\end{table}

In all simulation settings, data were generated from the corresponding model, and the proposed EM procedure was used for estimation. The subsections below report the results for each model separately. Each scenario consists of 200 replicates, in which 1000 observations were simulated each.

\subsection{Toy Example - Bernoulli Rewards }

To simulate from the Bernoulli toy example (illustrated in Figure \ref{berToyExNEW}), the following model was used:
\begin{equation}
    Y\sim \mathrm{RRDPH}_B\left((1,0,0,0),\left(\begin{array}{cccc}
     0&b&1-b&0  \\
     0&0&0&1\\
     0&0&0&0\\
     0&0&0&0\\
\end{array}\right),\left(\begin{array}{cccc}
     1&0&0&0  \\
     0&1&0&0\\
     0&0&p&0\\
     0&0&0&q
\end{array}\right)\right),
\end{equation}

with $b:=\frac12, q:=0.3$ and $p:=0.6$. These three parameters, b, q, and p, were estimated using 1000 draws from this distribution. This procedure was replicated 200 times to assess the volatility of parameter estimates. The estimates seem to be unbiased and have relatively low variance, as shown in Figure \ref{berErr}. 

\begin{figure}
\begin{minipage}{.45\textwidth}
    \centering
    \includegraphics[width=\linewidth]{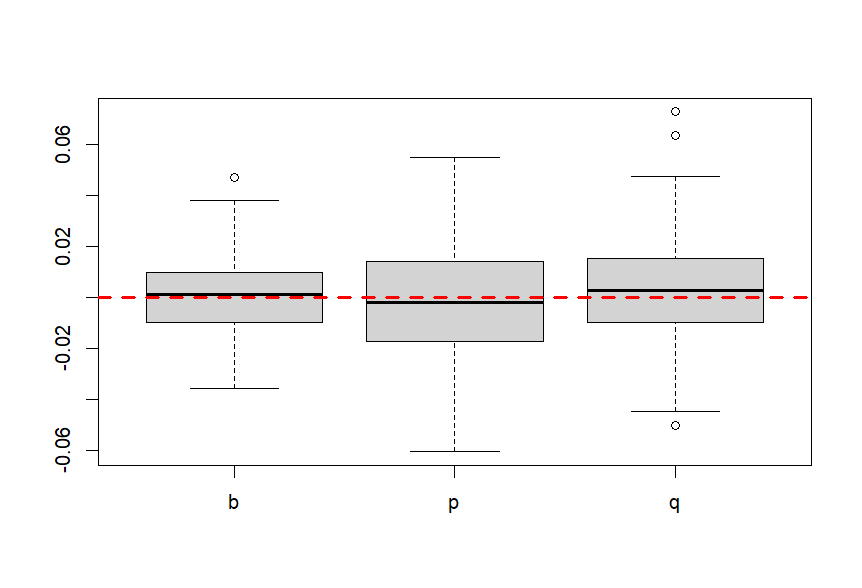}
    \caption{Errors of estimations of b,p, and q for the Bernoulli toy example using 200 replicates}
    \label{berErr}
\end{minipage}
\begin{minipage}{.45\textwidth}
    \centering
    \includegraphics[width=\linewidth]{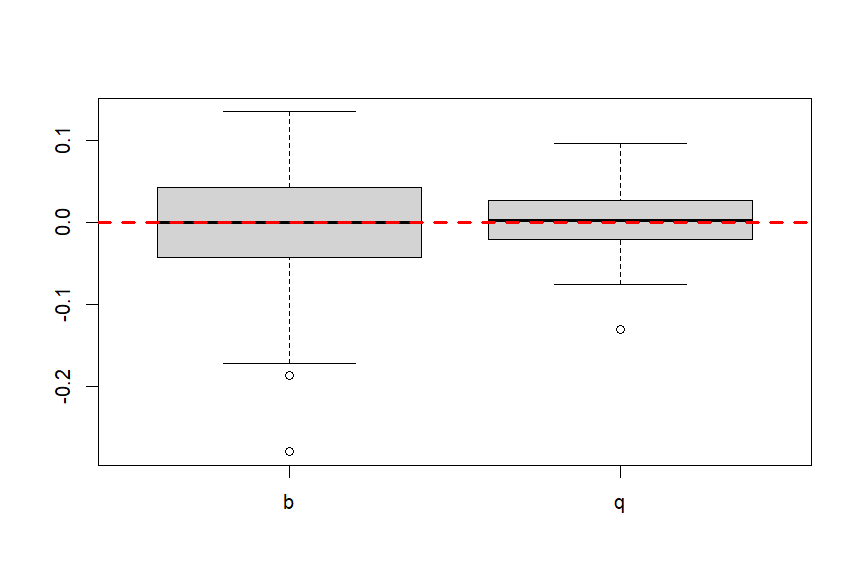}
    \caption{Errors of estimations of b and q for the geometric toy example using 200 replicates}
    \label{geoErr}
\end{minipage}
\end{figure}

\subsection{Toy Example - Geometric Rewards }

The toy example of geometric rewards, illustrated in Figure \ref{geoSplitPic}, is more computationally intensive because larger rewards may be accumulated per draw. Nevertheless, we again estimate parameters from 1000 draws and replicate this 200 times. The model is specified to be

\begin{equation}
      Y\sim \mathrm{RRDPH}_G\left((1,0,0),\left(\begin{array}{ccc}
     0&b&1-b  \\
     0&0&0\\
     0&0&0\\
\end{array}\right),\left(\begin{array}{ccc}
     1&0&0  \\
     0&1&0\\
     0&0&q
\end{array}\right)\right) . 
\end{equation}

The results again support unbiased estimation, albeit with higher variance, as shown in Figure \ref{geoErr}.

\subsection{IEM with Reward-Regression Model}

Here, we simulated $Y\sim IEM_R(\boldsymbol\beta_\nu,\boldsymbol\beta_\eta,\textbf{Q},\textbf{X})$, with $\boldsymbol\beta_\nu$ and $\boldsymbol\beta_\eta$ given in Table \ref{iemCoefTab} and estimated the parameters using the regression adaptation for \textbf{Q}.

Figure \ref{estModelPlot} shows the results of simulations using a model of the reward probabilities. The estimation for $\beta_0^\nu$ is a bit off, but not significantly different from the true value.  The reward probabilities are captured nicely, meaning that the two parameters, the intercept and slope, were estimated correctly. 
\begin{figure}
\begin{minipage}{0.45\textwidth}
    \centering
    \includegraphics[width=\linewidth]{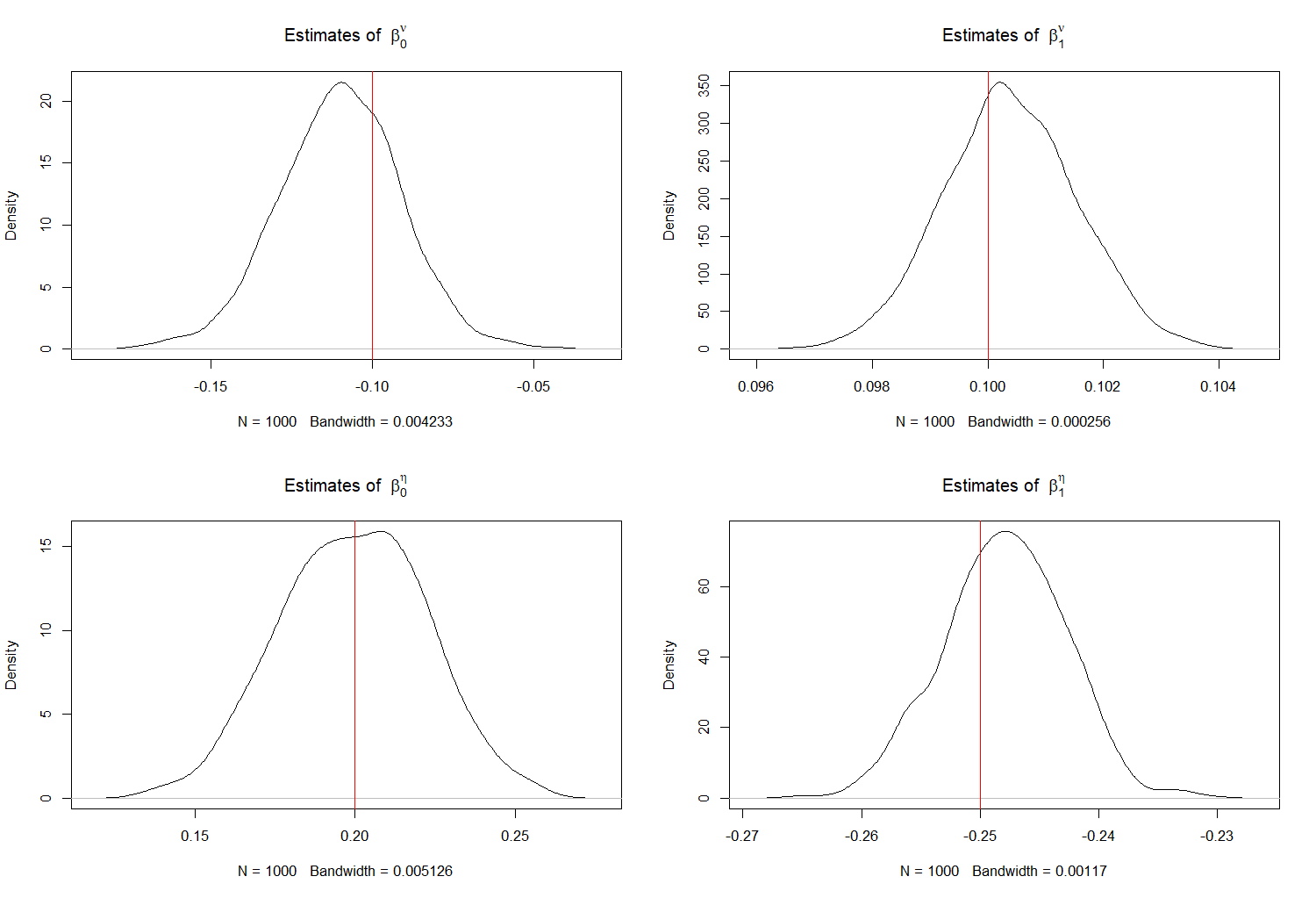}
    \end{minipage}\hfill
\begin{minipage}{0.45\textwidth}
    
    \centering
    \includegraphics[width=\linewidth]{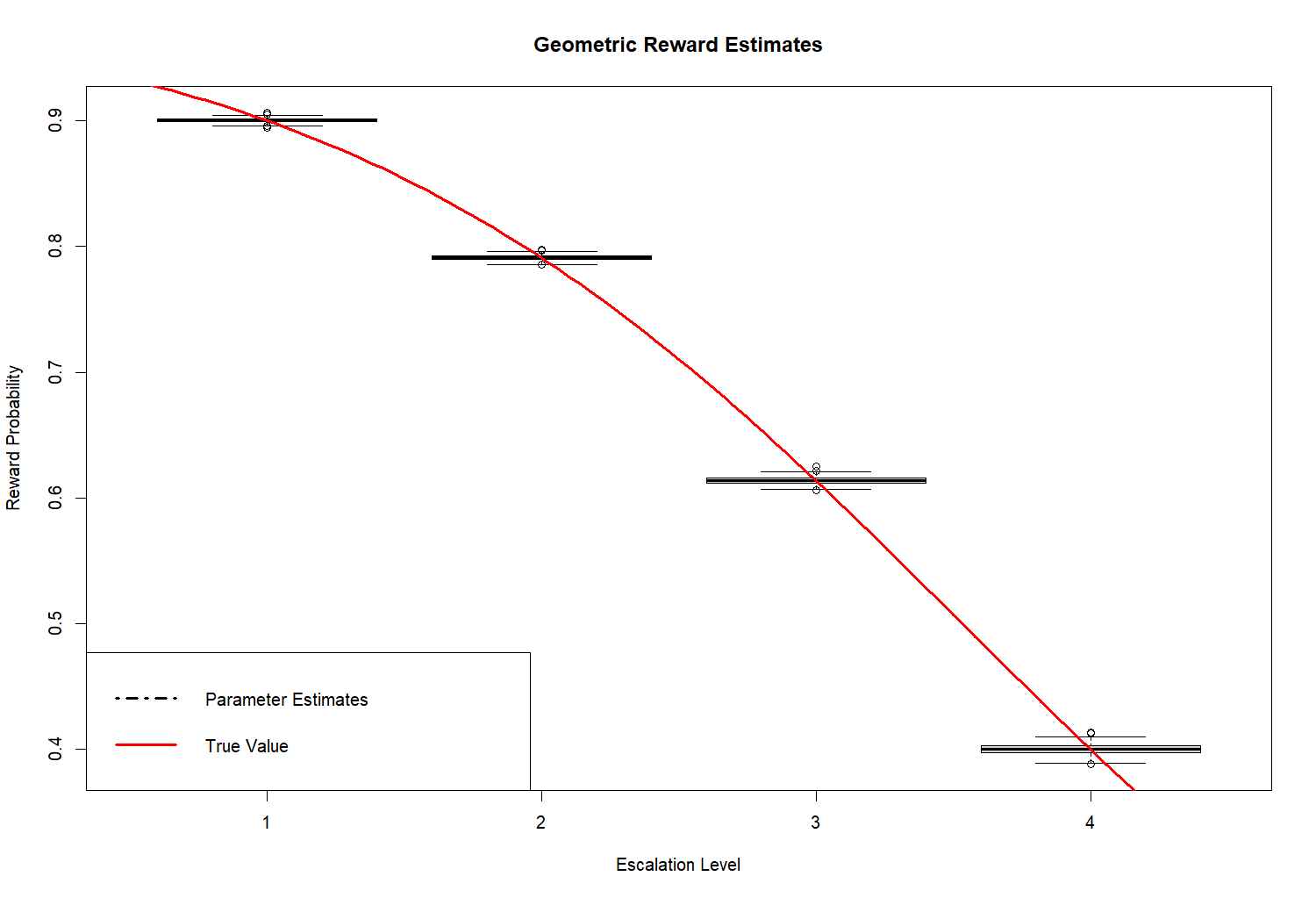}
\end{minipage}
    \caption{Estimation results, coefficients, and reward probabilities with regression model. Kernel density estimates for the coefficients are displayed (left), and boxplots for the reward probabilities (right). The true values and the underlying true regression curve are red.}
    \label{estModelPlot}
\end{figure}

\subsection{IEM with Free Reward Estimation}

Here, we simulated $Y\sim IEM_R(\boldsymbol\beta_\nu,\boldsymbol\beta_\eta,\textbf{Q},\textbf{X})$, with $\boldsymbol\beta_\nu$ and $\boldsymbol\beta_\eta$ given in Table \ref{iemCoefTab}, and estimated the parameters of \textbf{Q} freely.
Figure \ref{estFreePlot} shows promising results even for the free parameter estimation. Note that the true, underlying model (red) is still the same as in the previous section. This time, however, we do not exploit that structure. 

Still, the results for the reward probabilities seem unbiased and have a low variance. However, the kernel density of the parameter estimates across all replications appears more skewed this time, possibly due to the parameters' limited support. Once again, $\beta_0^\nu$ seems to be the most difficult to estimate. 
\begin{figure}
\begin{minipage}{0.45\textwidth}
    \centering
    \includegraphics[width=\linewidth]{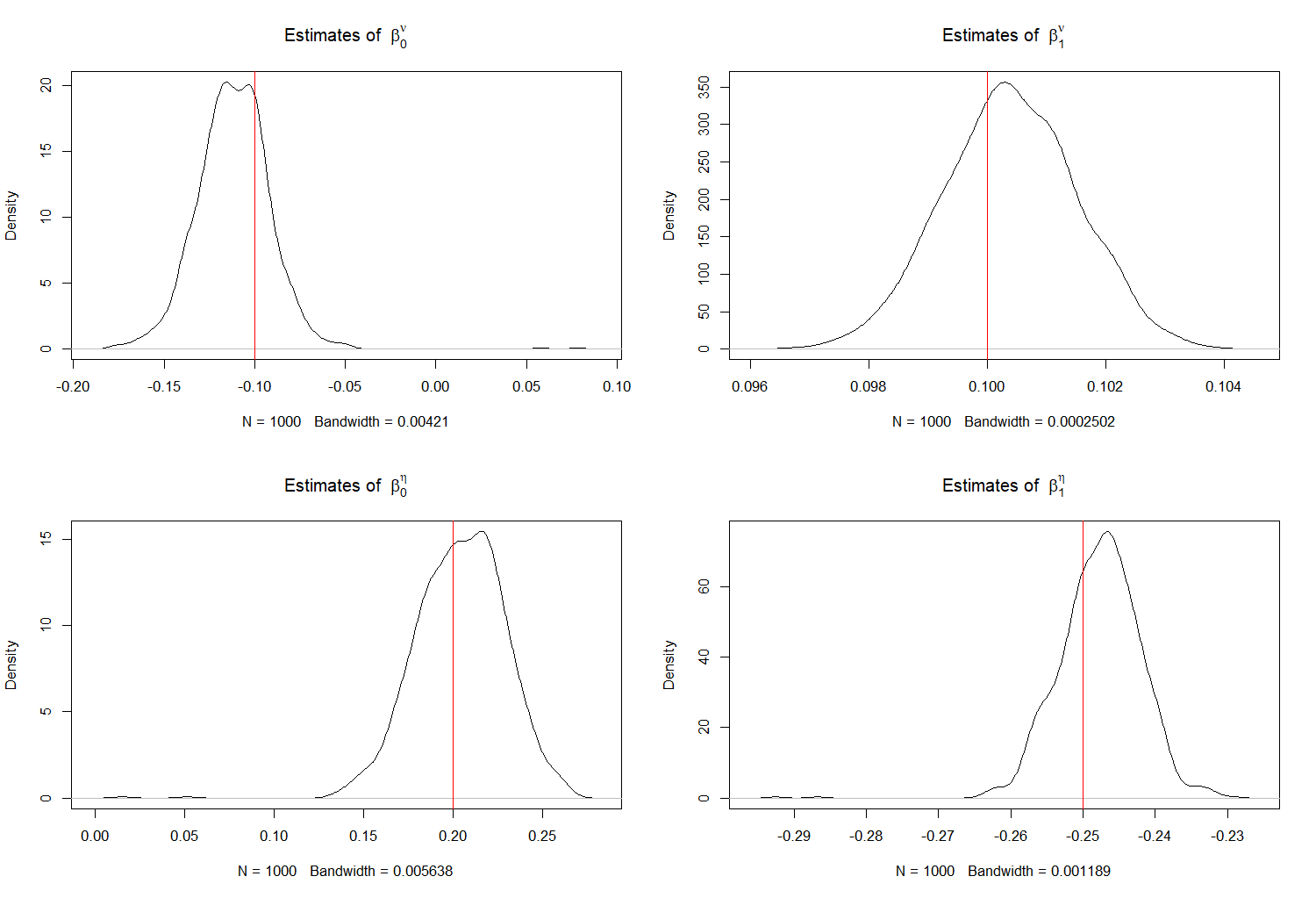}
\end{minipage}\hfill
\begin{minipage}{0.45\textwidth}
    \centering
    \includegraphics[width=\linewidth]{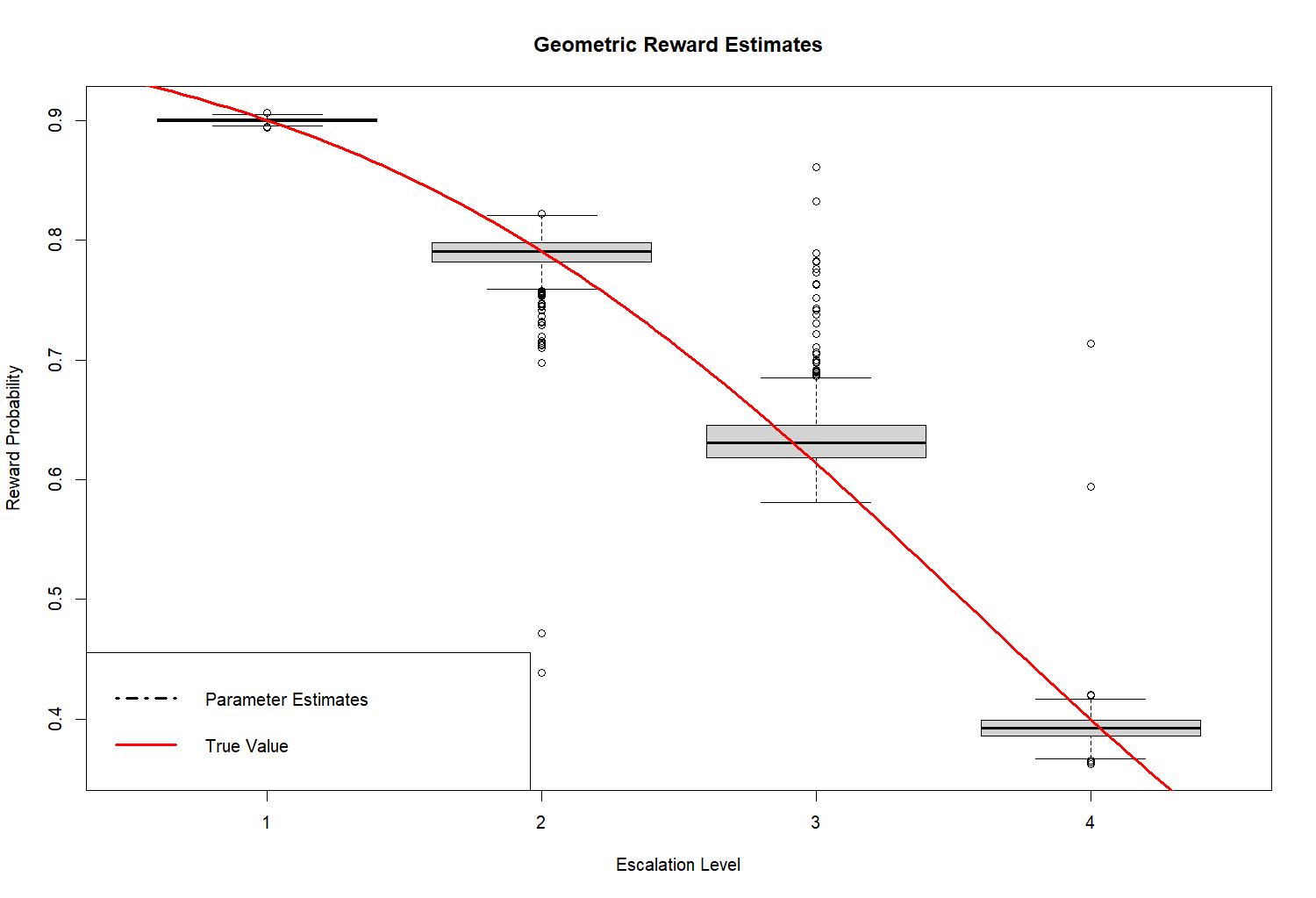}
    \end{minipage}
    \caption{Estimation results for regression coefficients, and freely estimable reward probabilities. The estimations are portrayed as kernel density estimates for the coefficients (left) and boxplots for the reward probabilities (right). The true values and the underlying true regression curve are red.}
    \label{estFreePlot}
\end{figure}

\section{Application}\label{ApplSec}

We illustrate the proposed IEM with random rewards on two datasets. The first is the Telco Customer Churn data \cite{telcoData}, which records customer tenure and billing information for a fictional telecommunication company. The second is the Brecke conflict catalog \cite{histData}, which contains historical conflict data. In both applications, the latent IEM models severity over time, while geometric rewards represent accumulated quantities such as total charges or total casualties. The geometric specification is not intended to be a literal mechanism in either case; rather, it provides a flexible and tractable way to model aggregated counts and totals.

\subsection{Telco Customer Churn}

Three variables from the Telco data are considered in our analysis: Contract, Total Charges, and Tenure. To keep things simple, $contract$ was simplified to a binary variable that could be either "month-to-month" or a longer-form contract. The month-to-month contract is the reference class, so the $\beta_1$ will show the difference that a longer-form contract makes. The tenure is defined as the time to absorption, meaning the time until a customer churns or leaves the company. Total charges are the total amount of money that the customer has paid during their tenure. This can be modeled naturally using random rewards. Total charges are treated as accumulated rewards and are therefore modeled using the geometric reward component of the IEM.
\begin{figure}
    \centering
    \includegraphics[width=0.5\linewidth]{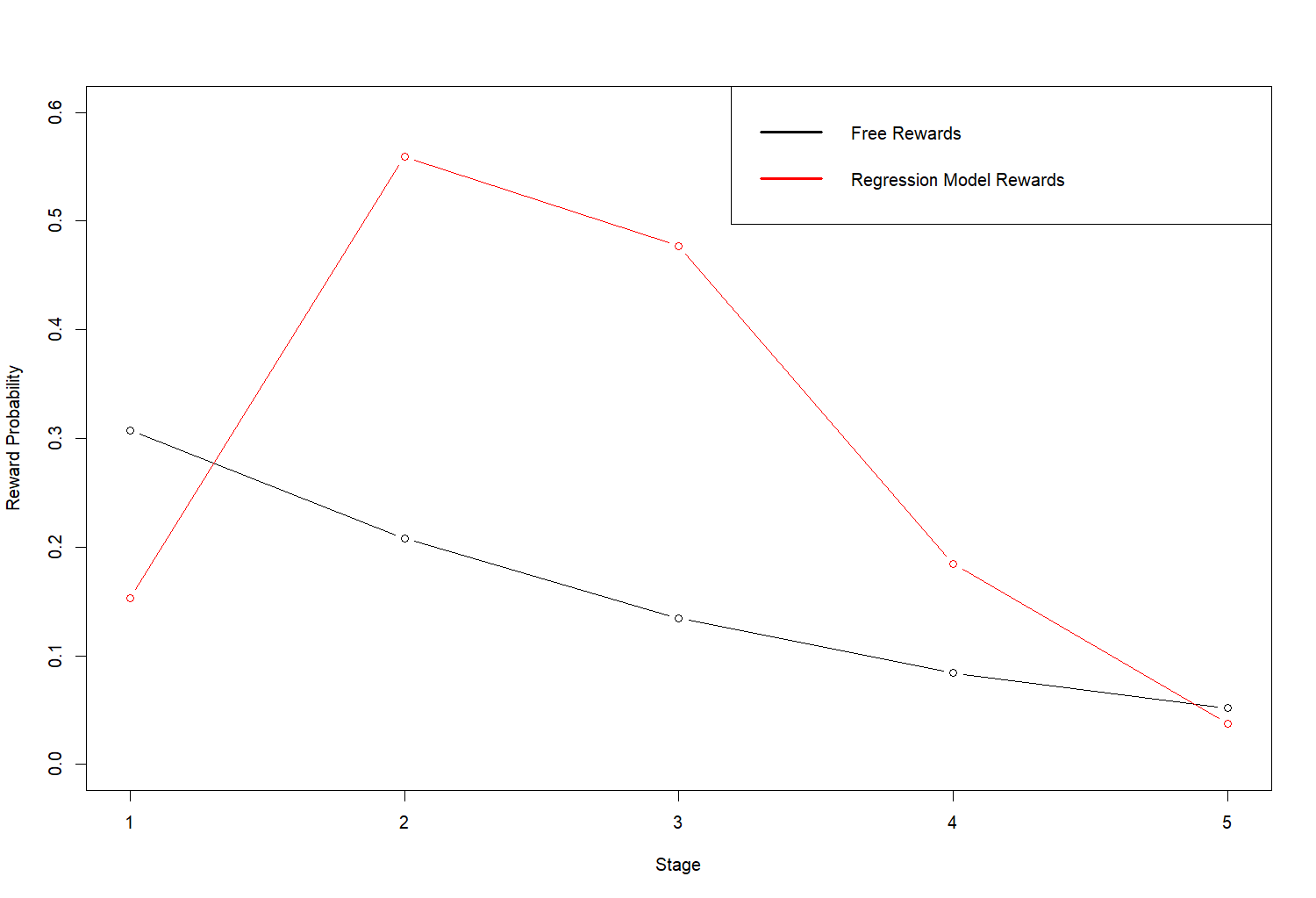}
    \caption{Estimated reward probabilities for the different stages for the Telco Customer Churn data. Estimates from the regression model are in red, freely varying estimates in black.}
    \label{telcRewPlot}
\end{figure}

Figure \ref{telcRewPlot} compares the estimated reward probabilities obtained under the regression reward model and under free stage-specific reward estimation. The two fits differ substantially. In particular, the freely estimated reward probabilities exhibit a clearly non-linear pattern across the latent states. This suggests that a linear reward specification is not suitable for this dataset. The fitted curves indicate that the charging process changes across latent severity levels. This might be interpreted to mean that the total charges at the beginning are quite high, and from there on, either one stops being a customer or reduces the payments significantly, until eventually, one reaches a higher level of customer severity again, where more subscriptions are added, and the payment is higher again in these phases.

Table \ref{telcCoefTable} shows the estimated coefficients in both settings. They are somewhat different, which can be attributed to the non-linear behavior of the freely estimated rewards that were seen in Figure \ref{telcRewPlot}. Notably, all coefficients are larger than zero, meaning that we have, in any case, $\eta>\frac12$ and $\nu>\frac12$. The positive slope estimates indicate that a longer-form contract is associated with higher latent severity levels and with longer expected tenure. This is consistent with the interpretation that customers on longer contracts tend to remain in the system longer, although the estimated effect should be read as an association rather than a causal statement.

\begin{table}[ht]\caption{Estimated Regression Coefficients}\label{telcCoefTable}
\centering
\begin{tabular}{r|rrrr}
  \hline
 & $\beta_0^\nu$ & $\beta_1^\nu$ & $\beta_0^\eta$ & $\beta_1^\eta$ \\ 
  \hline
Regression Rewards & 0.40 & 1.85 & 0.63 & 0.81 \\ \hline
Free Rewards & 0.78 & 1.35 & 0.79 & 1.82 \\ 
   \hline
\end{tabular}
\end{table}

\subsection{Casualties in Historical Warfare}

Historical warfare settings are also a suitable application for models such as the IEM. This is particularly the case if month-by-month data for events long past are not recorded in detail, but the total amount of certain measurements is available. As such, we consider three variables from the historical conflict dataset: Total duration, total military fatalities, and number of participating actors. The duration is calculated in monthly units, so if the number of days is available, it is divided by 30. As there are some clear, recent outliers, we opted to consider only conflicts before 1900, which excludes both world wars and makes the remaining conflicts somewhat more homogeneous.

As in the Telco application, the geometric reward component is used to model the accumulated total of a measured quantity, here military fatalities. Figure \ref{histRewPlot} shows the estimated reward probabilities under the regression reward model and under free stage-specific reward estimation. In this application, the two fits are much closer than in the Telco case, which suggests that the linear reward specification is adequate here. The estimated reward probabilities increase with the latent state index. Since larger geometric success probabilities correspond to smaller expected counts, this pattern indicates that wars start with high casualties, after which they either end quickly or casualties begin to decrease.

\begin{figure}
    \centering
    \includegraphics[width=0.5\linewidth]{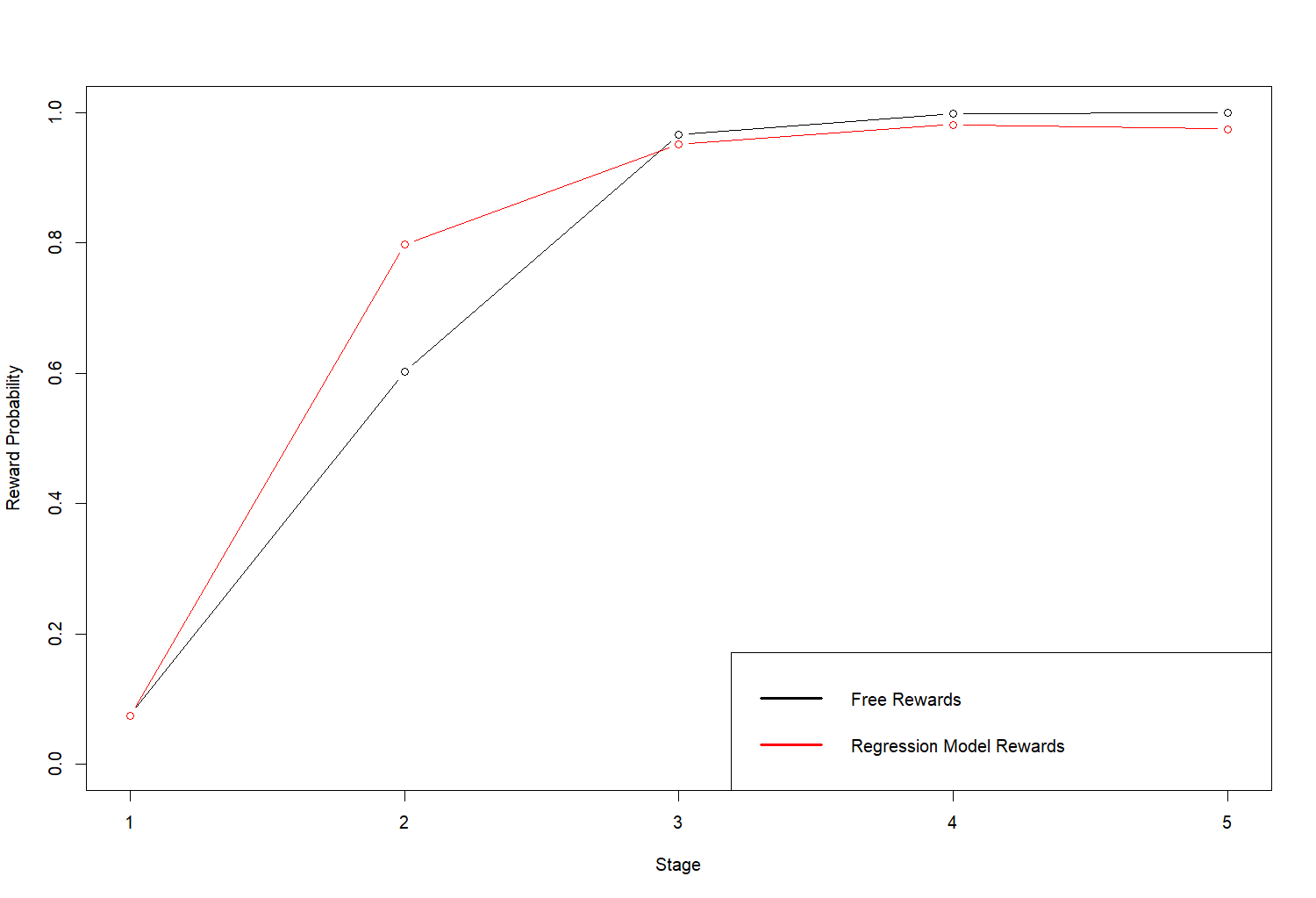}
    \caption{Estimated reward probabilities for the different stages for the historical conflict data. Estimates from the regression model are in red, freely varying estimates in black.}
    \label{histRewPlot}
\end{figure}

Table \ref{histCoefTable} gives the corresponding coefficient estimates. The positive slope estimates indicate that conflicts with more participants tend to last longer, and that longer conflicts are associated with progression toward lower-casualty states. This interpretation aligns with the fitted reward pattern and is broadly consistent with the idea that long-lasting conflicts often become less intense over time.

\begin{table}[ht]\caption{Estimated Regression Coefficients}\label{histCoefTable}
\centering
\begin{tabular}{r|rrrr}
  \hline
 & $\beta_0^\nu$ & $\beta_1^\nu$ & $\beta_0^\eta$ & $\beta_1^\eta$ \\ 
  \hline
Regression Rewards & -0.03 & 0.31 & 0.02 & 0.20 \\ \hline
Free Rewards & -0.07 & 0.31 & 0.01 & 0.20 \\ 
   \hline
\end{tabular}
\end{table}

\section{Discussion}\label{discSec}
The random rewards proposed in this paper go an important step in increasing the flexibility of modeling with DPH, but research in this area needs to continue. The two distributional families developed here, Bernoulli and geometric, have some applications. However, many situations in which random rewards might be used would require different, more suitable distributional families. 

Furthermore, a comparison of the IEM with alternative modeling strategies could be interesting. It also remains to be seen which further insights can be generated through the IEM. An example can be seen in our application section. There, the sign of the reward probability slope coefficient indicates whether rewards tend to accumulate at the beginning or end of a trajectory. 

Finally, the applicability of random rewards is not limited to the IEM; there are further situations where it might be useful. For example, in a missing-data process where not all transitions are observed, Bernoulli rewards might provide insight into the process if the observation probability is constant throughout the chain.

\backmatter

\bmhead{Acknowledgements}
We would like to extend our thanks to Qi-Ming He, who shared the code of his original paper on the EM algorithm for MDPH \cite{He2016Sep} with us. Furthermore, we thank IES 2025 for the opportunity to present this research at their conference.
\section{Data Availability Statement}
No new data were produced for this paper. The two datasets used in the application section are both publicly available. The historical warfare data \cite{histData} can be found at https://datahub.io/collections/war-and-peace, under the section "Brecke Conflict Catalog". The Telco Customer Churn data \cite{telcoData} can be found here:

\hspace{-7.5mm}
https://www.kaggle.com/datasets/blastchar/telco-customer-churn

\section*{Declarations}

Not applicable.

\begin{appendices}

\section{Proof of Theorem 1}

We restate the theorem and prove it:
\begin{theorem}\notag
Let $T\in[0,1]^{d\times d}$ be the sub-transition matrix of a DPH with initial probabilities $\pi\in\mathbb{R}^d$ and absorption vector $\textbf{t}=(\textbf{I}_d-\textbf{T})\textbf{e}$.  Let $\textbf{p}=(p_1,\dots,p_d)^\top$ and $\textbf{P}=\text{diag}(p_1,\dots,p_d)$.  Consider a model where on each visit to state $i$ an independent $\mathrm{Bernoulli}(p_i)$ determines whether the visit is rewarded, and let $X=(\psi,\tau-\psi)$ denote the rewarded and unrewarded counts, respectively (so $\psi$ is the number of rewarded visits and $\tau$ the absorption time). Let $Y=(Y_1,Y_2)\sim \mathrm{RRDPH}_B(\boldsymbol\pi,\textbf{T},\textbf{P})$ as in definition \ref{berRewDef}.  Then \textbf{X} and \textbf{Y} are distributionally equivalent.
\end{theorem}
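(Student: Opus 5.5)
The argument is a coupling. On the probability space carrying the original chain $(S_t)$ and the independent Bernoulli marks, I will build a Markov chain on the doubled state space $\{1,\dots,d\}\times\{0,1\}$ and show two things: it has initial law $\boldsymbol\beta(\boldsymbol\pi,\textbf{P})$ and sub-transition matrix $\textbf{B}_b(\textbf{T},\textbf{P})$; and its reward counts under $(\textbf{r}_1,\textbf{r}_2)$ are exactly $(\psi,\tau-\psi)$. Equality in law of $\textbf{X}$ and $\textbf{Y}$ then follows from the definition of $\mathrm{MDPH}$ in Definition \ref{berRewDef}.

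\emph{Construction.} Let $S_1,S_2,\dots,S_\tau$ be the visited transient states. Here $S_1\sim\boldsymbol\pi$ and $\tau$ is the absorption time. Let $L_t\in\{0,1\}$ be the reward indicator of visit $t$. Conditionally on the whole path $(S_s)_s$, the $L_t$ are independent with $P(L_t=1\mid \cdot)=p_{S_t}$. Define $Z_t=(S_t,L_t)$ for $t\le\tau$, and $Z_t=S^{abs}$ afterwards. Index the doubled space so that $(i,0)$ is coordinate $i$ and $(i,1)$ is coordinate $d+i$, matching the block layout of \eqref{foldTrmatBer}.

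\emph{Markov property and transition law.} First, $P(Z_1=(i,0))=\pi_i(1-p_i)$ and $P(Z_1=(i,1))=\pi_ip_i$, which is \eqref{berNewIniEq}. Next I compute, for $\ell,m\in\{0,1\}$,
\[
P\bigl(Z_{t+1}=(j,m)\mid Z_1,\dots,Z_{t-1},Z_t=(i,\ell)\bigr)=T_{ij}\,p_j^{m}(1-p_j)^{1-m}.
\]
This uses three facts. The path $(S_s)$ is Markov. The marks $L_1,\dots,L_t$ carry no further information about $S_{t+1}$ given $S_t$, because the marks are conditionally independent of the path given the visited states. And $L_{t+1}$ is independent of everything else given $S_{t+1}$. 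The right-hand side does not depend on $\ell$ or on the earlier history, so $(Z_t)$ is Markov. Its transition matrix is $\textbf{T}(\textbf{I}-\textbf{P})$ into the $0$-block and $\textbf{T}\textbf{P}$ into the $1$-block, from either source block; that is exactly $\textbf{B}_b(\textbf{T},\textbf{P})$. The absorption probability from $(i,\ell)$ is $t_i$, which agrees with $(\textbf{I}-\textbf{B}_b)\textbf{e}=\textbf{t}$ stacked twice. So $(Z_t)$ has absorption time $\tau$.

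\emph{Reward counts.} With $\textbf{r}_1=(\textbf{0}_d,\textbf{1}_d)^\top$, the accumulated reward $\sum_{t\le\tau}\textbf{r}_1(Z_t)$ counts the visits with $L_t=1$, which is $\psi$. With $\textbf{r}_2=(\textbf{1}_d,\textbf{0}_d)^\top$, it counts the visits with $L_t=0$, which is $\tau-\psi$. So $(\psi,\tau-\psi)$ is a deterministic functional of a chain whose law is $\mathrm{MDPH}(\boldsymbol\beta,\textbf{B}_b,(\textbf{r}_1,\textbf{r}_2))$. Hence $\textbf{X}\overset{d}{=}\textbf{Y}$.

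The only delicate step is the conditional-independence argument in the Markov-property step, namely that the past marks do not bias the next transition. If a cleaner route is needed, I would compute the path probability of a full finite trajectory $(z_1,\dots,z_n)$ directly. It factorises as $\pi_{i_1}p_{i_1}^{\ell_1}(1-p_{i_1})^{1-\ell_1}\prod_{t=1}^{n-1}T_{i_ti_{t+1}}p_{i_{t+1}}^{\ell_{t+1}}(1-p_{i_{t+1}})^{1-\ell_{t+1}}\,t_{i_n}$, which is the $\mathrm{MDPH}$ path probability. Summing over the trajectories with given counts then gives the joint pmf identity.
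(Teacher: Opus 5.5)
Your proof is correct, and it takes a genuinely different route from the paper's. The paper's proof is analytic. It inserts $\boldsymbol\beta$ and $\textbf{B}_b$ into the MDPH joint PGF $\boldsymbol\beta^\top\Delta(\textbf{I}_{2d}-\textbf{B}_b\Delta)^{-1}\textbf{b}$ and solves the block system: subtracting the two block rows gives $\textbf{u}_1=\textbf{u}_2$, and adding them gives $\textbf{u}_d=(\textbf{I}_d-\textbf{TD})^{-1}\textbf{t}$ with $\textbf{D}=(\textbf{I}-\textbf{P})\theta_2+\textbf{P}\theta_1$. It then computes the PGF of $\textbf{X}$ as the path sum $\sum_{n\ge0}\boldsymbol\pi^\top\textbf{D}(\textbf{TD})^n\textbf{t}$ and concludes from equality of PGFs.

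You instead build the expanded chain on the same probability space as the marked original chain and read off its initial law and transition matrix, or, in your fallback, match trajectory probabilities term by term. This buys several things:
\begin{itemize}
\item It gives a pathwise coupling, which is stronger than equality in law.
\item It needs no matrix inversion, no Neumann-series convergence and no PGF-uniqueness step. Incidentally, the paper justifies convergence by $\|\textbf{TD}\|\le\|\textbf{T}\|<1$, which need not hold in standard norms (e.g.\ $\|\cdot\|_\infty$ when some $t_i=0$). The correct justification is $\rho(\textbf{TD})\le\rho(\textbf{T})<1$ by entrywise domination.
\item It explains why the splitting construction works.
\item It extends verbatim to any finitely supported reward law, using one copy of each state per support point.
\end{itemize}

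The paper's route, in exchange, delivers the closed-form joint PGF $\boldsymbol\pi^\top\textbf{D}(\textbf{I}_d-\textbf{TD})^{-1}\textbf{t}$, which is convenient for moments. It also provides an algebraic template that the paper reuses for the geometric case. Note, though, that its path-sum formula for $G_X$ tacitly relies on exactly the trajectory factorization you write out explicitly.
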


\begin{proof}
We give a direct algebraic proof by computing both PGFs and showing that they are equal. First, we consider \textbf{Y}.
By the matrix PGF formula for MDPH (see \ Eq.\ (5.6) in \cite{navarro}) the PGF of $Y$ is
\[
G_Y(\theta_1,\theta_2)\;=\;\mathbb E(\theta_1^{Y_1}\theta_2^{Y_2})=
\boldsymbol\beta^\top \Delta \underbrace{\left(\textbf{I}_{2d} - \textbf{B}_b\Delta\right)^{-1} \textbf{b}}_{:=\textbf{u}},
\]
where $\textbf{b}=(\textbf{t},\textbf{t})$, $\textbf{B}_b$ and $\boldsymbol\beta$ follow from definition \ref{berRewDef}, and
\[
\Delta=\text{diag}(\underbrace{\theta_2,\dots,\theta_2}_{d},
\underbrace{\theta_1,\dots,\theta_1}_{d}).
\]
Write the block factors
\[
\textbf{A}:=\textbf{T}(\textbf{I}-\textbf{P}),\qquad \textbf{B}:=\textbf{TP}\qquad(\text{so }\textbf{A}+\textbf{B}=\textbf{T}).
\]
Then
\[
\textbf{B}_b\Delta
=
\begin{pmatrix} \textbf{A}\theta_2 & \textbf{B}\theta_1 \\[4pt] \textbf{A}\theta_2 & \textbf{B}\theta_1 \end{pmatrix},
\qquad
\textbf{I}_{2d}-\textbf{B}_b\Delta
=
\begin{pmatrix} \textbf{I}_d - \textbf{A}\theta_2 & -\textbf{B}\theta_1 \\[4pt] -\textbf{A}\theta_2 & \textbf{I}_d - \textbf{B}\theta_1 \end{pmatrix}.
\]
Using $\textbf{u}:=(\textbf{I}_{2d}-\textbf{B}_b\Delta)^{-1}\textbf{b}=\begin{pmatrix}\textbf{u}_1\\[4pt]\textbf{u}_2\end{pmatrix}$ (with $\textbf{u}_1,\textbf{u}_2\in\mathbb{R}^d$), the system
\[
(\textbf{I}_{2d}-\textbf{B}_b\Delta)\textbf{u}=\textbf{b}
\]
can be split up into the equations
\begin{align}
(\textbf{I}_d - \textbf{A}\theta_2)\textbf{u}_1 - \textbf{B}\theta_1\,\textbf{u}_2 &= \textbf{t}, \label{eq:block1}\\[4pt]
- \textbf{A}\theta_2\,\textbf{u}_1 + (\textbf{I}_d - \textbf{B}\theta_1)\textbf{u}_2 &= \textbf{t}. \label{eq:block2}
\end{align}
Subtracting \eqref{eq:block2} from \eqref{eq:block1} eliminates the $\textbf{A}\theta_2$ and $\textbf{B}\theta_1$ terms and yields
\[
\textbf{u}_1 - \textbf{u}_2 = \textbf{0} \quad\Longrightarrow\quad \textbf{u}_1=\textbf{u}_2=:\textbf{u}_d.
\]
Adding \eqref{eq:block1} and \eqref{eq:block2} and using $\textbf{A}\theta_2+\textbf{B}\theta_1 = \textbf{T}\left((\textbf{I}-\textbf{P})\theta_2+\textbf{P}\theta_1\right)$ gives
\[
\left(\textbf{I}_d - \textbf{TD}\right)(\textbf{u}_1+\textbf{u}_2) = 2\,\textbf{t},\text{ where }\]\[
\qquad \textbf{D} := \text{diag}\left((1-p_i)\theta_2+p_i\theta_1\right)_{i=1}^d=(\textbf{I}-\textbf{P})\theta_2+\textbf{P}\theta_1.
\]
Since $\textbf{u}_d=\textbf{u}_1=\textbf{u}_2$ we obtain
\[
\textbf{u}_d = (\textbf{I}_d - \textbf{TD})^{-1} \textbf{t}.
\]
Now restate $G_Y$ using lower-dimensional matrices:
\begin{align*}
G_Y(\theta_1,\theta_2)
&= \boldsymbol\beta^\top \Delta \textbf{u}
= \left(((\textbf{I}-\textbf{P})\pi)^\top\theta_2,\;(\textbf{P}\boldsymbol\pi)^\top\theta_1\right)
\begin{pmatrix}\textbf{u}_1\\[4pt]\textbf{u}_2\end{pmatrix}\\[4pt]
&= \boldsymbol\pi^\top\left( (\textbf{I}-\textbf{P})\theta_2\,\textbf{u}_1 + \textbf{P}\theta_1\,\textbf{u}_2\right)
= \boldsymbol\pi^\top\left((\textbf{I}-\textbf{P})\theta_2 + \textbf{P}\theta_1\right) \textbf{u}_d\\[4pt]
&= \boldsymbol\pi^\top \textbf{D}\,(\textbf{I}_d - \textbf{TD})^{-1} \textbf{t}.
\end{align*}

Now we consider the PGF of \textbf{X}. The PGF of a Bernoulli($p_i$) success/failure tuple (s,f) is \[\sum\limits_{(s,f)\in\{(1,0),(0,1)\}}p_i^s(1-p_i)^f\theta_1^s\theta_2^f=(1-p_i)\theta_2+p_i\theta_1.\]
Collect the PGFs of these Bernoulli distributions in a diagonal matrix \[\textbf{D}=\text{diag}((1-p_i)\theta_2+p_i\theta_1)_{i=1}^d=(\textbf{I}-\textbf{P})\theta_2+\textbf{P}\theta_1.\] Then the pathwise contribution to the PGF is
\[G_X(\theta_1,\theta_2) = \boldsymbol\pi\textbf{Dt}+\boldsymbol\pi\textbf{DTDt}+\boldsymbol\pi\textbf{D}(\textbf{TD})^2\textbf{t}+\ldots=\sum\limits_{n\ge0}\boldsymbol\pi\textbf{D}(\textbf{TD})^n\textbf{t}.\]

For $(\theta_1,\theta_2)\in[0,1]^2$ we have $\|\textbf{TD}\|\le\|\textbf{T}\|<1$ (since $\textbf{T}$ is sub-stochastic and $\max_i d_i\le1$), so the Neumann series converges and
\[
G_X(\theta_1,\theta_2)=\boldsymbol\pi^\top \textbf{D}(\textbf{I}_d - \textbf{TD})^{-1} \textbf{t}.
\]

Comparing the expressions derived for $G_Y$ and $G_X$ shows
\[
G_Y(\theta_1,\theta_2)=\boldsymbol\pi^\top \textbf{D}(\textbf{I}_d - \textbf{TD})^{-1} \textbf{t}=G_X(\theta_1,\theta_2).
\]
Hence, the two joint PGFs coincide. Therefore, \textbf{X} and \textbf{Y} have equivalent distributions.

\end{proof}

\section{Proof of Theorem 2}

We restate the theorem and prove it:

\begin{theorem}
    Let $T\in[0,1]^{d\times d}$ be the sub-transition matrix of a DPH with initial probabilities $\pi\in\mathbb{R}^d$ and absorption vector $\textbf{t}=(\textbf{I}_d-\textbf{T})\textbf{e}$.  Let $\textbf{q}=(q_1,\dots,q_d)^\top$ and $\textbf{Q}=\text{diag}(q_1,\dots,q_d)$.  Consider a model where on each visit to state $i$ an independent $\mathrm{Geometric}(q_i)$ determines the reward of that visit, and let $X=(\psi,\tau)$ denote the accumulated rewards and transitions until absorption, respectively.  Let $Y=(Y_1,Y_2)\sim \mathrm{RRDPH}_G(\boldsymbol\pi,\textbf{T},\textbf{Q})$.  Then \textbf{X} and \textbf{Y} are distributionally equivalent.
\end{theorem}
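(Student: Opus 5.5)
The plan is to copy the strategy of the proof of Theorem~\ref{berRewTheo}: compute the joint PGFs of $\textbf{Y}$ and $\textbf{X}$ on $(\theta_1,\theta_2)\in[0,1]^2$ and show they coincide.

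\emph{The PGF of $\textbf{Y}$.} By the MDPH PGF formula of \cite{navarro}, $G_Y(\theta_1,\theta_2)=\boldsymbol\beta^\top\Delta(\textbf{I}_{2d}-\textbf{B}_g\Delta)^{-1}\textbf{b}$. Here $\boldsymbol\beta=(\boldsymbol\pi,\textbf{0})$ and $\textbf{b}=(\textbf{Qt},\textbf{Qt})$. Because $\textbf{r}_2$ marks the first block (time) and $\textbf{r}_1$ the second block (reward), we have $\Delta=\text{diag}(\theta_2\textbf{I}_d,\theta_1\textbf{I}_d)$. Then
\[
\textbf{B}_g\Delta=\begin{pmatrix}\theta_2\textbf{QT} & \theta_1(\textbf{I}-\textbf{Q})\\ \theta_2\textbf{QT} & \theta_1(\textbf{I}-\textbf{Q})\end{pmatrix}.
\]
Write $\textbf{u}=(\textbf{u}_1,\textbf{u}_2)$ for the solution of $(\textbf{I}_{2d}-\textbf{B}_g\Delta)\textbf{u}=\textbf{b}$. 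Both block rows have identical off-identity parts and identical right-hand sides, so subtracting them gives $\textbf{u}_1=\textbf{u}_2=:\textbf{u}_d$. The first block row then reads
\[
\bigl(\textbf{I}-\theta_1(\textbf{I}-\textbf{Q})\bigr)\textbf{u}_d=\textbf{Q}\bigl(\theta_2\textbf{T}\textbf{u}_d+\textbf{t}\bigr).
\]
Define $\boldsymbol\Gamma:=\text{diag}(g_i(\theta_1))$ with $g_i(\theta_1)=q_i/(1-(1-q_i)\theta_1)$. Then $\textbf{u}_d=\boldsymbol\Gamma(\theta_2\textbf{T}\textbf{u}_d+\textbf{t})$, hence $\textbf{u}_d=(\textbf{I}-\theta_2\boldsymbol\Gamma\textbf{T})^{-1}\boldsymbol\Gamma\textbf{t}$. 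Since $\boldsymbol\beta$ vanishes on the second block,
\[
G_Y=\theta_2\boldsymbol\pi^\top\textbf{u}_d=\boldsymbol\pi^\top\textbf{D}(\textbf{I}-\textbf{T}\textbf{D})^{-1}\textbf{t},\qquad \textbf{D}:=\theta_2\boldsymbol\Gamma,
\]
where the last equality uses the push-through identity $\boldsymbol\Gamma(\textbf{I}-\theta_2\textbf{T}\boldsymbol\Gamma)^{-1}=(\textbf{I}-\theta_2\boldsymbol\Gamma\textbf{T})^{-1}\boldsymbol\Gamma$.

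\emph{The PGF of $\textbf{X}$.} Each visit to state $i$ contributes one unit of time and an independent $\mathrm{Geometric}(q_i)$ reward, counted as the number of successes before the first failure, which has PGF $g_i(\theta_1)$. So each visit contributes the factor $\theta_2 g_i(\theta_1)$, i.e.\ the diagonal matrix $\textbf{D}$. Summing over paths as in the Bernoulli proof gives
\[
G_X=\sum_{n\ge0}\boldsymbol\pi^\top\textbf{D}(\textbf{T}\textbf{D})^n\textbf{t}=\boldsymbol\pi^\top\textbf{D}(\textbf{I}-\textbf{T}\textbf{D})^{-1}\textbf{t}.
\]
This equals $G_Y$, so $\textbf{X}$ and $\textbf{Y}$ have the same joint PGF and hence the same distribution.

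\emph{Main obstacle: invertibility and convergence.} The step needing care is justifying the inverses and the Neumann series.
\begin{itemize}
\item Because $q_i>0$ and $\theta_1\le1$, each diagonal entry $1-(1-q_i)\theta_1\ge q_i>0$, so $\textbf{I}-\theta_1(\textbf{I}-\textbf{Q})$ is invertible and $0\le g_i\le1$.
\item Hence $\|\textbf{T}\textbf{D}\|\le\|\textbf{T}\|$, which gives convergence exactly as in the Bernoulli proof, at least for $\theta$ in a neighbourhood where $\textbf{T}$ is strictly substochastic; I would argue via the spectral radius otherwise.
\item It must also be checked that this geometric convention, counting successes before the first failure with failure probability $q_i$, matches the self-loop mechanism $C_1\to C_1$ with probability $1-q$ in $\textbf{B}_g$.
\item Finally, the first-block entry must contribute exactly one factor $\theta_2$ per original visit, including the initial visit, since $\boldsymbol\beta^\top\Delta$ weights the starting state.
\end{itemize}
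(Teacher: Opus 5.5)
Your proof is correct and has the same skeleton as the paper's. Both compute the two joint PGFs, use the identical block rows of $\textbf{I}_{2d}-\textbf{B}_g\Delta$ to get $\textbf{u}_1=\textbf{u}_2$, and sum the pathwise series for $G_X$. The difference is the closing algebra.

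\begin{itemize}
\item The paper adds the two block rows to obtain $\textbf{u}_d=\bigl(\textbf{I}-\theta_2\textbf{Q}\textbf{T}-\theta_1(\textbf{I}-\textbf{Q})\bigr)^{-1}\textbf{Q}\textbf{t}$. It then needs the Woodbury identity and a chain of rearrangements to match this with $\textbf{D}(\textbf{I}-\textbf{T}\textbf{D})^{-1}$.
\item You divide the first block row by the diagonal matrix $\textbf{I}-\theta_1(\textbf{I}-\textbf{Q})$. This gives the first-step equation $\textbf{u}_d=\boldsymbol\Gamma(\theta_2\textbf{T}\textbf{u}_d+\textbf{t})$, and the one-line push-through identity finishes it.
\end{itemize}

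Your route is shorter, avoids Woodbury, and shows directly that the geometric PGF $g_i(\theta_1)$ is simply absorbed into each original visit. The paper's row-summing, in exchange, stays closer to the template of the Bernoulli proof.

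The checks you list are already settled by your own computation. The reward copy of $i$ is entered, and re-entered, with probability $1-q_i$, and left with total probability $q_i$. This gives $P(K=k)=(1-q_i)^kq_i$, matching $g_i$. Also, $\boldsymbol\beta^\top\Delta=(\theta_2\boldsymbol\pi^\top,\textbf{0})$ supplies the factor $\theta_2$ for the initial visit.

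For convergence, drop the ``neighbourhood where $\textbf{T}$ is strictly substochastic'' wording, since $\textbf{T}$ does not depend on $\theta$. Use the spectral radius outright:
\begin{itemize}
\item For $(\theta_1,\theta_2)\in[0,1]^2$ you have $0\le\textbf{T}\textbf{D}\le\textbf{T}$ entrywise, so $\rho(\textbf{T}\textbf{D})\le\rho(\textbf{T})<1$ by Perron--Frobenius monotonicity.
\item $\theta_2\boldsymbol\Gamma\textbf{T}=\textbf{D}\textbf{T}$ has the same nonzero spectrum as $\textbf{T}\textbf{D}$.
\end{itemize}
Together these justify both the Neumann series and the inverse of $\textbf{I}-\theta_2\boldsymbol\Gamma\textbf{T}$.

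This is the correct justification. The bound $\|\textbf{T}\textbf{D}\|\le\|\textbf{T}\|<1$ that the paper inherits from the Bernoulli proof is not valid in general. A transient chain may have rows of $\textbf{T}$ summing to one, as for state A in the toy examples, which makes the row-sum norm equal to one.
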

\begin{proof}
    This proof is similar to that of theorem \ref{berRewTheo}, but the cyclical structure of geometric rewards increases the complexity. We again show that the joint PGFs coincide. For \textbf{Y}, the PGF is \[G_Y(\theta_1,\theta_2)=\boldsymbol\beta\Delta(\textbf{I}_{2d}-\textbf{B}_g\Delta)^{-1}\textbf{b},\] where $\textbf{b}=(\textbf{Qt},\textbf{Qt})$, $\textbf{B}_g$ and $\boldsymbol\beta$ follow from definition \ref{geoRewDef}, and \[\Delta=\text{diag}(\underbrace{\theta_2,\ldots,\theta_2}_{d},\underbrace{\theta_1,\ldots,\theta_1}_d).\]

    The block factors are \[\textbf{A}:=\textbf{QT},\qquad B:= (\textbf{I}_d-\textbf{Q})\]

    Then \[\textbf{I}_{2d}-\textbf{B}_g\Delta = \begin{pmatrix}
        \textbf{I}_d-\textbf{A}\theta_2 & -\textbf{B}\theta_1\\[4pt]-\textbf{A}\theta_2&\textbf{I}_d-\textbf{B}\theta_1
    \end{pmatrix}\] Using $\textbf{u}:=(\textbf{I}_{2d}-\textbf{B}_g\Delta)^{-1}\textbf{b}=\begin{pmatrix}
        \textbf{u}_1\\\textbf{u}_2
    \end{pmatrix}$, the system again simplifies to $\textbf{u}_1=\textbf{u}_2=:\textbf{u}_d$. Adding (\eqref{eq:block1} and \eqref{eq:block2} from proof 1) gives 

    \begin{align*}&2\textbf{I}_d\textbf{u}_d-2\textbf{A}\theta_2\textbf{u}_d-2\textbf{B}\theta_1\textbf{u}_d=2\textbf{Qt}\\[4pt]&\Longleftrightarrow (\textbf{I}_d-\textbf{A}\theta_2-\textbf{B}\theta_1)\textbf{u}_d=\textbf{Qt}\end{align*}
    
    Using the definitions of \textbf{A} and \textbf{B},
    \begin{align*}&(\textbf{I}_d-\textbf{QT}\theta_2-(\textbf{I}-\textbf{Q})\theta_1)\textbf{u}_d=\textbf{Qt}\\[4pt]\Longleftrightarrow & \textbf{u}_d=(\textbf{I}_d-(\textbf{QT}\theta_2+(\textbf{I}-\textbf{Q})\theta_1))^{-1}\textbf{Qt}.\end{align*}

    Restating $G_Y$ gives \begin{align*}
        G_Y(\theta_1,\theta_2)&=\boldsymbol\beta^\top\Delta\textbf{u}=\boldsymbol\pi^\top\theta_2\textbf{u}_d=\boldsymbol\pi^\top\theta_2(\textbf{I}_d-(\textbf{QT}\theta_2+(\textbf{I}-\textbf{Q})\theta_1))^{-1}\textbf{Qt}.
    \end{align*}

    The PGF for \textbf{X} relies on the pathwise inclusion of a geometric kernel for the per-transition emission $k$, as well as a certainly contributing 1 to $\tau$ per step: \[G_{k,s}^i(\theta_1,\theta_2)=\sum\limits_{k\ge0,s=1}(1-q_i)^kq_i\theta_1^k\theta_2=\frac{q_i\theta_2}{1-(1-q_i)\theta_1}.\]
    Thus, the PGF components can be collected to form the matrix \textbf{D}:  \[\textbf{D}(\theta_1,\theta_2):=\text{diag}\left(\frac{q_i\theta_2}{1-(1-q_i)\theta_1}\right)_{i=1}^d=\textbf{Q}\theta_2(\textbf{I}-(\textbf{I}-\textbf{Q})\theta_1)^{-1}\] As the dependence on $\theta_1,\theta_2$ is clear, it will be omitted henceforth. The remaining calculation is as in the Bernoulli case: \[G_X(\theta_1,\theta_2)=\boldsymbol\pi^\top\textbf{D}\sum\limits_{n\ge0}(\textbf{TD})^n\textbf{t}=\boldsymbol\pi^\top\textbf{D}(\textbf{I}_d-\textbf{TD})^{-1}\textbf{t}.\]

    It remains to show that \[\boldsymbol\pi^\top\theta_2(\textbf{I}_d-(\textbf{QT}\theta_2+(\textbf{I}-\textbf{Q})\theta_1))^{-1}\textbf{Qt} = \boldsymbol\pi^\top\textbf{D}(\textbf{I}_d-\textbf{TD})^{-1}\textbf{t}.\]
    Expanding the expression for D again gives 

    \[
        \boldsymbol\pi^\top\theta_2(\textbf{I}-(\textbf{QT}\theta_2+(\textbf{I}-\textbf{Q})\theta_1))^{-1}\textbf{Qt} = \boldsymbol\pi^\top\textbf{Q}\theta_2(\textbf{I}-(\textbf{I}-\textbf{Q})\theta_1)^{-1}(\textbf{I}-\textbf{T}\textbf{Q}\theta_2(\textbf{I}-(\textbf{I}-\textbf{Q})\theta_1)^{-1})^{-1}\textbf{t}.\]

        One can immediately get rid of the leading $\theta_2\boldsymbol\pi^\top$ as well as the trailing \textbf{t}, leaving
        \[(\textbf{I}-(\textbf{QT}\theta_2+(\textbf{I}-\textbf{Q})\theta_1))^{-1}\textbf{Q} = \textbf{Q}(\textbf{I}-(\textbf{I}-\textbf{Q})\theta_1)^{-1}(\textbf{I}-\textbf{T}\textbf{Q}\theta_2(\textbf{I}-(\textbf{I}-\textbf{Q})\theta_1)^{-1})^{-1}.\]

    Define $\textbf{A}:=\textbf{I}-(\textbf{I}-\textbf{Q})\theta_1$ and $\textbf{B}:=-\textbf{T}\theta_2$, so that the equation becomes

    \[(\textbf{A}+\textbf{QB})^{-1}\textbf{Q}=\textbf{QA}^{-1}(\textbf{I}+\textbf{BQA}^{-1})^{-1}.\]

    We use a special case of the well-known identity by Woodbury \cite{woodburry} 
    
    \[(\textbf{A}+\textbf{UV})^{-1}=\textbf{A}^{-1}-\textbf{A}^{-1}\textbf{U}(\textbf{I}+\textbf{VA}^{-1}\textbf{U})^{-1}\textbf{VA}^{-1},\] transforming the equation to

    \[(\textbf{A}^{-1}-\textbf{A}^{-1}\textbf{Q}(\textbf{I}+\textbf{BA}^{-1}\textbf{Q})^{-1}\textbf{BA}^{-1})\textbf{Q} = \textbf{QA}^{-1}(\textbf{I}+\textbf{B}\textbf{Q}\textbf{A}^{-1})^{-1}.\]

    Notably, \textbf{A}, \textbf{Q}, and \textbf{I} are diagonal matrices and thus commute. For \textbf{A} this is true because it is the sum of diagonal matrices. Define $\textbf{C}:=(\textbf{I}+\textbf{BQA}^{-1})^{-1}$. The identity can then be shown using simple matrix algebra:
    \begin{align*}&(\textbf{A}^{-1}-\textbf{A}^{-1}\textbf{QCB}\textbf{A}^{-1})\textbf{Q} = \textbf{Q}\textbf{A}^{-1}\textbf{C}&&\text{(Distributive \& Commutativity)}\\[4pt]
       &(\textbf{QA}^{-1}-\textbf{A}^{-1}\textbf{QCBQA}^{-1}) = \textbf{Q}\textbf{A}^{-1}\textbf{C}&&|(\textbf{QA}^{-1})^{-1}\text{ from the left}\\[4pt]
       &\textbf{I}-\textbf{CBQA}^{-1} = \textbf{C}&&|+\textbf{CBQA}^{-1}\\[4pt]
&\textbf{I} = \textbf{C}+\textbf{CBQA}^{-1}&&|\textbf{C}^{-1}\text{ from the left}\\[4pt]
&\textbf{C}^{-1} = \textbf{I}+\textbf{BQA}^{-1}&&|\textbf{C}\text{ left and }(\textbf{I}+\textbf{BQA}^{-1})^{-1}\text{ right}\\[4pt]
    & (\textbf{I}+\textbf{BQA}^{-1})^{-1}=\textbf{C},\end{align*}
    which is true by definition.  
\end{proof}

\section{The EM algorithm for the IEM}\label{inferenceAppendix}

We derive the EM updates for the geometric-reward IEM by specializing the general MDPH procedure of He and Ren \cite{He2016Sep} to the special case of RRDPH, which has non-overlapping rewards in $\{0,1\}$. Non-IEM-based RRDPH applications can use an analogous EM algorithm. 

\paragraph{E-Step.}
Algorithm \ref{expAlg} gives step-by-step instructions to recover the expected state transition matrix conditioned on parameters $\nu^{cur}$, $\eta^{cur}$, and $\textbf{q}^{cur}$. The expected number of transitions to absorption is then added as column 2d+1 to the rest of the expected transition counts, making the E-step at iteration $i$ return a $2d\times2d+1$ matrix denoted $\widehat{\textbf{N}}^{(i)}$.

\begin{algorithm}\caption{Gathering the Expected Total Transition Counts}\label{expAlg}
    \begin{algorithmic}[1]\vspace{2mm}
        \STATE Input $\bar\nu,\;\bar\eta$, and $\bar{\textbf{q}}$
        \STATE Initiate $C^0=\textbf I_2$ \vspace{3mm}
        \STATE Calculate $\textbf{T}_{IE}$ from $\bar\nu,\;\bar\eta,$ and $\bar{\textbf{q}}$ using eq. \eqref{ieMod}, $\textbf{b}_0=(\textbf I_{2d} - \textbf{T}_{IE}) \textbf{e}$\vspace{3mm}
        \STATE For each reward type $1\le r\le 2$, define $\textbf{B}_r\in\mathbb R^{d\times d}$ with \[(\textbf{B}_r)_{i,j} = \begin{cases}
                (\textbf{T}_{IE})_{i,j},& \text{if  } R_{j,r}=1\\
    0,              & \text{otherwise}
        \end{cases}\]\vspace{3mm}
        \STATE Set  $\textbf{p}_\textbf{Y}((0,0))=\textbf{b}_0$, $\boldsymbol\alpha((0,0))=\boldsymbol\pi$\vspace{3mm}
        \STATE Update $\textbf{p}_\textbf{Y}(y)$ and $\boldsymbol\alpha(y)$:\[\textbf{p}_\textbf{Y}(y)=\sum\limits_{i=1}^2\textbf{B}_i\cdot\textbf{p}_\textbf{Y}(y-(\textbf{C}^0)_i)\cdot\textbf{1}(\forall_{j\in\{1,2\}}: y_j \ge(\textbf{C}^0)_{i,j}),\]
        
        and \[\boldsymbol\alpha(y)=\sum\limits_{i=1}^2\boldsymbol\alpha(y-(\textbf{C}^0)_i)\cdot\textbf{B}_i\cdot\textbf{1}(\forall_{j\in\{1,2\}}: y_j \ge(\textbf{C}^0)_{i,j}),\]
        where $(\textbf{C}^0)_i$ is the i-th row of $\textbf{C}^0$.
        \STATE For observations $1\le i\le n$ and observation type $1\le r\le 2$ we now know the expected transitions conditioned on observations and parameters: \[\mathbb E(N_{(i,j),r}\mid y_i,\boldsymbol\theta,(\textbf{P/Q})) = (\boldsymbol\pi\cdot\textbf{p}_\textbf{Y}(y_i))^{-1}\sum\limits_{u:u\le y_i-\textbf{C}^0_r}\boldsymbol\alpha_i(u)(\textbf{B}_{r})_{(ij)}\textbf{p}_{\textbf{Y},j}(y_i-\textbf{C}^0_r-u),\] as well as the conditional expected absorptions \[\mathbb E(N_{(i,0)}\mid y_i,\boldsymbol\theta,(\textbf{P/Q})) = (\boldsymbol\pi\cdot\textbf{p}_\textbf{Y}(y_i))^{-1}\boldsymbol\alpha_i(y_i)(\textbf{b}_0)_i\]
    \end{algorithmic}
\end{algorithm}

\paragraph{M-Step.}
Let \(\mathcal S^\star=\{1,\dots,2d\}\) denote the expanded transient state space, and for each original severity state \(j\in\{1,\dots,d\}\), define
\[
\mathcal I_j=\{j,j+d\}.
\]
Thus, \(\mathcal I_j\) contains the rewarded and unrewarded copy of the original state \(j\). For observation \(i\), let
\[
N^{(i)}_{ab}, \qquad a,b\in\{1,\dots,2d\},
\]
denote the latent number of transitions from expanded state \(a\) to expanded state \(b\), and let \(N^{(i)}_{a,2d+1}\) denote the number of transitions from expanded state \(a\) to absorption. The E-step computes the conditional expectations
\[
\widehat N^{(i)}_{ab}
=
\mathbb E\!\left[N^{(i)}_{ab}\mid Y_i,\theta^{(m)}\right],
\qquad
\widehat N^{(i)}_{a,2d+1}
=
\mathbb E\!\left[N^{(i)}_{a,2d+1}\mid Y_i,\theta^{(m)}\right].
\]
For compactness, define grouped counts by
\[
\widehat N^{(i)}_{\mathcal A,\mathcal B}
=
\sum_{a\in\mathcal A}\sum_{b\in\mathcal B}\widehat N^{(i)}_{ab},
\]
for subsets \(\mathcal A,\mathcal B\subseteq \mathcal S^\star\). In particular, \(\widehat N^{(i)}_{\mathcal I_j,\mathcal I_k}\) is the expected number of transitions from original state \(j\) to original state \(k\), regardless of whether the transition occurs in the rewarded or unrewarded copy.

Let
\[
\widehat N_{jk}=\sum_{i=1}^n \widehat N^{(i)}_{\mathcal I_j,\mathcal I_k},
\qquad
\widehat N_{j0}=\sum_{i=1}^n \widehat N^{(i)}_{\mathcal I_j,2d+1}.
\]
These are the aggregated expected transition counts used in the M-step.

The expected complete-data log-likelihood separates into a transition part and a reward part:
\[
\ell_c(\nu,\eta,\mathbf Q)=\ell_T(\nu,\eta)+\ell_Q(\mathbf Q),
\]
 Writing
\[
S_{\mathrm{stay}}=\sum_{j=1}^d \widehat N_{jj},\qquad
S_{\mathrm{up}}=\sum_{j=1}^{d-1}\widehat N_{j,j+1}+\widehat N_{d0},\qquad
S_{\mathrm{down}}=\widehat N_{10}+\sum_{j=2}^d \widehat N_{j,j-1},
\]
the transition part becomes
\[
\ell_T(\nu,\eta)
=
S_{\mathrm{stay}}\log \nu
+
S_{\mathrm{up}}\log\!\bigl((1-\nu)\eta\bigr)
+
S_{\mathrm{down}}\log\!\bigl((1-\nu)(1-\eta)\bigr).
\]
Maximizing this expression gives the closed-form updates
\begin{equation}\label{thetaMax}
\widehat\nu=
\frac{S_{\mathrm{stay}}}{S_{\mathrm{stay}}+S_{\mathrm{up}}+S_{\mathrm{down}}},
\qquad
\widehat\eta=
\frac{S_{\mathrm{up}}}{S_{\mathrm{up}}+S_{\mathrm{down}}}.
\end{equation}

For the geometric rewards, define
\[
F_j=\sum_{i=1}^n \widehat N^{(i)}_{\mathcal I_j,\,j+d},
\qquad
U_j=\sum_{i=1}^n \sum_{b=1}^{2d+1}\widehat N^{(i)}_{\mathcal I_j,b}.
\]
Here \(F_j\) is the expected number of reward failures in original state \(j\), and \(U_j\) is the total expected number of reward opportunities in that state. The reward part of the complete-data log-likelihood is
\[
\ell_Q(\mathbf Q)
=
\sum_{j=1}^d \Bigl[(U_j-F_j)\log q_j + F_j\log(1-q_j)\Bigr],
\]
which yields the update
\begin{equation}\label{rewardMaxxing}
\widehat q_j
=
1-\frac{F_j}{U_j}
=
\frac{U_j-F_j}{U_j},
\qquad j=1,\dots,d.
\end{equation}

These formulas are the specialized MDPH EM updates of \cite{He2016Sep} for the IEM representation used in this paper.

\paragraph{Regression version.}
If the reward probabilities are parameterized by the linear model
\[
\logit(q_j)=\beta_{q,0}+\beta_{q,1}j,
\qquad j=1,\dots,d,
\]
then the update for \(\mathbf Q\) is replaced by a weighted quasibinomial GLM with response counts \(U_j-F_j\), totals \(U_j\), and covariate \(j\). The regression updates for \(\nu\) and \(\eta\) are obtained analogously from weighted quasibinomial GLMs using the corresponding pseudo-observations and weights (denominators of the maximization function) produced by the E-step.

\paragraph{Complete Algorithm.} Algorithm \ref{maxStepAlg} describes the entire EM method, where the expected steps $\textbf{N}^{(t)}$ are found using algorithm \ref{expAlg}, and the parameter maximizations use equations \eqref{thetaMax} and \eqref{rewardMaxxing}.

\begin{algorithm}\caption{Expectation-Maximization Iteration for the IEM}
    \begin{algorithmic}[1]\label{maxStepAlg}\vspace{2mm}
        \STATE Initialize $\boldsymbol\theta_0,\;\textbf{Q}_0;\; t=0$, maximal iterations $\max_{iter}$ and minimal variation $\min_{var}$\vspace{2mm}
        \STATE Find the conditional expected steps $\textbf{N}^{(t)}$ using algorithm \ref{expAlg}.
        \STATE Maximize the transition parameters: \[\boldsymbol\theta^{(t+1)}=(\widehat{\nu(\textbf{N}^{(t)})},\widehat{\eta(\textbf{N}^{(t)}))}\]\vspace{-4mm}
        \STATE Maximize the reward probabilities: \[diag(\textbf{Q}^{(t+1)}) = (\widehat{q_1}(\textbf{N}^{(t)}),\cdots,\widehat{q_d}(\textbf{N}^{(t)}))\]\vspace{-4mm}
        \STATE Set $t:=t+1$\vspace{2mm}
        \STATE If $t<\max_{iter}$ and $\min_{var}<\mid\mid\boldsymbol\theta_{t}-\boldsymbol\theta_{t-1}\mid\mid$, go to step 2\vspace{2mm}
        \STATE Return ($\theta_{t},\textbf{P}_{t}/\textbf{Q}_{t}$)\vspace{1mm}
    \end{algorithmic}
\end{algorithm}

\end{appendices}


\bibliography{sn-bibliography}

\end{document}